\numberwithin{equation}{section}
\theoremstyle{plain}
\newtheorem{theorem}{Theorem}[section]
\newtheorem{lemma}[theorem]{Lemma}
\newtheorem{corollary}[theorem]{Corollary}
\newtheorem{proposition}[theorem]{Proposition}
\newtheorem{definition}[theorem]{Definition}
\theoremstyle{remark}
\newtheorem{remark}[theorem]{Remark}
\newcommand\Pone{\textrm{P}_{\textrm{I}} }
\newcommand\Ptwo{\textrm{P}_{\textrm{II}} }
\newcommand\Ptwomu{\textrm{P}_{\textrm{II}}(\mu) }
\newcommand{\orcidauthorA}{0000-0001-7504-4444}
\newcommand{\orcidauthorB}{0000-0003-1354-1271}
\title{On the Perturbed Second Painlev\'{e} Equation}
\author{Joshua Holroyd}
\thanks{JH's ORCID ID is \orcidauthorB.
}
\address{School of Mathematics and Statistics F07, University of Sydney, NSW 2006 Australia}
\email{j.holroyd@maths.usyd.edu.au}
\author{Nalini Joshi}
\thanks{NJ's ORCID ID is \orcidauthorA. NJ's research was supported by an
  Australian Research Council Discovery Project \#DP210100129.
}
\address{School of Mathematics and Statistics F07, University of Sydney, NSW 2006 Australia}
\email{nalini.joshi@sydney.edu.au}
\subjclass[2020]{33E17,34E10,41A60}
\date{\today}
\begin{document}
\maketitle
{\centering\footnotesize \emph{Dedicated to Sir Michael V. Berry on the occasion of his $\textit{80}^{\textit{th}}$ birthday.}\par}
\begin{abstract}
We consider a perturbed version of the second Painlev\'{e} equation ($\Ptwo$), which arises in applications, and show that it possesses solutions analogous to the celebrated Hastings-McLeod and tritronqu\'ee solutions of $\Ptwo$. The Hastings-McLeod-type solution of the perturbed equation is holomorphic, real-valued and positive on the whole real-line, while the tritronqu\'ee-type solution is holomorphic in a large sector of the complex plane. These properties also characterise the corresponding solutions of $\Ptwo$ and are surprising because the perturbed equation does not possess additional distinctive properties that characterise $\Ptwo$, particularly the Painlev\'e property.
\end{abstract}
\bigskip

%%%%%%%%%%%%%%%%%%%%%%%%%%%%%%%%%%%%%%%%%%%%%%%%%%%%%%%%%%%%%%%%%%%%%%%%%%%%%%%%%%%%%%%%%%%%%%%%%%%%%%%%

\section{Introduction}

The second Painlev\'{e} Equation ($\Ptwo$) has a rich history as a mathematical model, arising in a very wide range of applications (see, for example, \cites{clarkson2006painleve,forrester2015painleve,tracy1994level,tracy1999random}). In many of these applications, the particular case of $\Ptwo$ that arises is the case 
$\alpha=0$ of 
\begin{equation*} \label{eq:p2}
    \Ptwo :\qquad y''(x)=2y(x)^3+x y(x)+\alpha,
\end{equation*}
where primes denote differentiation in $x$ and $\alpha$ is a given parameter.
In this paper, we consider a perturbed version of $\Ptwo$ with $\alpha=0$:
\begin{equation*} \label{eq:p2gen}
    \Ptwomu :\qquad y''(x)=2y(x)^3+x^\mu y(x),\qquad y:\mathbb{C}\rightarrow\mathbb{C},\qquad \mu\in\mathbb{Z}^+.
\end{equation*}
The main results of this paper show that there exists (i) a holomorphic, real-valued, positive solution of $\Ptwomu$ on the real-line $x\in\mathbb R$; and (ii) a solution that is asymptotic to a power-series expansion in $x$, as $|x|\to\infty$, in a sector of angular extent $4\pi/(\mu+2)$ in $\mathbb C$. These properties mirror two well-known solutions of $\Ptwo$, called the Hastings-McLeod solution \cite{hastings1980boundary} and Boutroux's tritronqu\'ee solution \cite{boutroux1913recherches} respectively, which are distinguished by their global properties.

$\Ptwo$ belongs to a class of six ordinary differential equations (ODEs), called the Painlev\'e equations, whose solutions are known to have only poles as movable singularities \cites{conte2012painleve,i:56}. Every non-rational solution of $\Ptwo$ is known to possess an infinite number of movable poles \cite{howes2014joshi}, which are typically distributed across the interior of each sector $S_n$ of the complex plane given by \cite{boutroux1913recherches} \begin{equation}\label{eq:sectors}
S_n=\bigl\{x\in\mathbb C \bigm| (n-1)\pi/3\leq\arg{x}\le n\pi/3\bigr\},\ n\in\mathbb Z, \mod{6}.
\end{equation}
(Here, the word ``movable'' refers to the location of each pole depending on initial conditions.) Therefore, solutions that are free of movable poles across one or more sectors $S_n$ are uncommon, while those free of poles in a domain containing the real line are of particular interest for physical applications. The tritronqu\'ee solution is free of poles in a wide sector of the complex plane, given by four contiguous sectors $S_n$, while the Hastings-McLeod solution is free of poles and analytic on the whole real-line. 

The Hastings-McLeod solution now plays a critical role in modern random matrix theory, due to its appearance in the  the Tracy-Widom distribution, describing the distribution of the largest eigenvalue of Gaussian unitary ensembles of random matrices, as the size of each matrix approaches infinity. The Hastings-McLeod solution also arises in the analogous distribution functions for the Gaussian orthogonal and symplectic ensembles  \cite{tracy1994level,clarkson2006painleve,Deift07}.

There are many more physical models in which $\Ptwo$ arises. We mention, in particular, fluid dynamics \cite{fokas1981linearization,gromak2008painleve,clarkson2006painleve,olver2011numerical,ablowitz1977exact,claeys2010painleve}, mathematical physics \cite{periwal1990exactly,seiberg2005flux,tracy1999random,schiappa2014resurgence,forrester2015painleve}, and electrodynamics \cite{bass2010electrical,bass1964irreversible}. In particular, the second Painlev\'{e} Equation arises in a model of steady one-dimensional two-ion electro-diffusion \cite{bass1964irreversible}. A review may be found in \cite{conte2012painleve}. In connection with plasma physics, de Boer and Ludford \cite{de1975spherical} posed the question of the existence of a solution of a generalised form of $\Ptwo$ with specific boundary conditions. Hastings and McLeod \cite{hastings1980boundary} not only showed the existence of a solution to this problem, but they also proved it to be unique; this is the now-famous Hastings-McLeod solution of $\Ptwo$. It is natural in many of the physical contexts to consider small perturbations of the model of interest. We describe such a perturbation of Bass' electro-diffusion model in Appendix \ref{section:app} and show how $\Ptwomu$ arises naturally.

A perturbed version of the first Painlev\'e equation has been previously considered \cites{joshi2003PImu,oldedaalhuis22}. Tronqu\'ee and tritronqu\'ee solutions of perturbed $\Pone$ were shown to exist (and be unique in the tritronqu\'ee case) in \cites{joshi2003PImu}. Meanwhile, \cites{oldedaalhuis22} studies the full exponentially-improved asymptotic expansion corresponding to a formal series solution of perturbed $\Pone$. These tronqu\'ee solutions are uniquely characterised by the Stokes multiplier, which is an arbitrary constant multiplying the exponentially small terms. Asymptotic approximations for the locations of singularities along boundaries of validity are deduced in terms of the Stokes multipliers \cites{oldedaalhuis22}.

In the case of $\Ptwomu$, tronqu\'ee solutions are associated with a formal power-series solution $y_f\sim(-x^\mu/2)^{1/2}$ as $|x|\rightarrow\infty$ (given in full in Section \ref{section:TT}), we mention that the (singly) exponentially small perturbation is given by
\begin{equation*}
    y\sim y_f+kx^{-\mu/4}\exp{\left(\frac{2}{\mu+2}\left(-2x^{\mu+2}\right)^{1/2}+\textit{o}(1)\right)},
\end{equation*}
where the square root branch depends on the sector of the complex plane, and $k$ is the free constant multiplying exponentially small behaviour. The tritronqu\'ee solution, which we study in Section \ref{section:TT}, corresponds to the unique $k=0$ case of this expansion.

\subsection{Main results} We identify a tritronqu\'ee solution of $\Ptwomu$, through asymptotic expansions of solutions as $|x|\to\infty$; showing that a true solution asymptotic to such a series in the sector $|\arg(x)|\leq 2\pi/(\mu+2)$ exists and is unique. These results are given in Propositions \ref{prop:tronquee} and \ref{prop:tritronquee}. We also consider a boundary-value problem on the real-line; as $x\to-\infty$, the solution is asymptotic to a power-series expansion, and as $x\to+\infty$, the solution decays to zero exponentially fast. We show that this problem has a unique solution in the case of $\mu=2k+1$, $k\in\mathbb N$, which generalises the Hastings-McLeod solution. We give this result in Theorem \ref{thrm:HM}.

\subsection{Background}
The Painlev\'e equations were found in a search for ODEs that defined new higher transcendental functions a century ago. Here the word ``new" refers to these functions not being able to be expressed in terms of any classical operations applied to previously known such functions (we refer to definitions of these terms in \cite{umemura2007invitation}). Both the Hastings-McLeod and tritronqu\'ee solutions are examples of such transcendental functions. 

For special values of the parameter $\alpha$, $\Ptwo{}$ admits hierarchies of rational solutions \cite{airault1979rational,albrecht1996algorithms,fukutani2000special,clarkson2003second} and special solutions related to the classical Airy functions \cite{gambier1910}. Beyond special solutions, transcendental solutions exist that vanish as $x\to+\infty$ on the real line, providing another relation to the Airy functions in the case when $\alpha=0$. The Hastings-McLeod solution first attracted attention in the asymptotic analysis of such solutions. 

The asymptotic analysis of the Painlev\'e equations was initiated by Boutroux \cite{boutroux1913recherches}, who also briefly considered the generalised equation $\Ptwomu$. Boutroux showed that the transformed version of $\Ptwo$
\begin{equation*} \label{eq:bp2}
    u''(z)=2u^3(z)+u(z)+\frac{2\alpha}{3z}-\frac{u'(z)}{z}+\frac{u(z)}{9z^2},
\end{equation*}
is more amenable to asymptotic analysis, where
\begin{equation*}
    y(x)=x^{1/2}u(z)\qquad\text{and}\qquad z=\frac{2}{3}x^{3/2}.
\end{equation*}
The limit $|x|\to\infty$ now becomes $|z|\to\infty$, and direct observation shows that a (Jacobi) elliptic function describes the leading-order behaviour, solving the ODE
\begin{equation*}
    w''(z)=2w^3(z)+w(z).
\end{equation*}
Boutroux went on to describe how lines of movable poles curve around to approach the boundaries of the sectors $S_n$ described by Equation \eqref{eq:sectors}.

\subsection{Outline}
In Section \ref{section:asymptotics}, we describe decaying solutions of $\Ptwomu$ as $|x|\rightarrow\infty$ on the real-line. We prove the existence and uniqueness of the analogue of Boutroux's tritronqu\'ee solution for $\Ptwomu$ and its asymptotic validity in a broad sector of the complex plane in Section \ref{section:TT}. In Section \ref{section:HM}, we prove the existence and uniqueness of a natural generalisation of the  Hastings-McLeod solution. The paper's results are summarised in Section \ref{sec:conc}. Finally, we show how $\Ptwomu$ arises in a physical setting in Appendix \ref{section:app}.

\section{Asymptotics of decaying solutions on the real-line}
\label{section:asymptotics}

Since $\mu>0$, the decaying solutions of $\Ptwomu$ satisfy the leading order equation
\[
y''(x)\sim x^\mu y(x),
\]
as $x\to\pm\infty$ on the real-line. We analyse the asymptotic behaviours of $y(x)$, which satisfy this leading-order equation. 

The linear ODE $\phi''(x)=x^\mu \phi(x)$ is a generalised Airy equation and is satisfied by two linearly independent functions
\begin{equation} \label{eq:genAi}
    \begin{split}
        f(x)=\sqrt{\frac{2x}{\pi\alpha}}K_{1/(2\alpha)}\left(\frac{x^\alpha}{\alpha}\right)\qquad\text{and}\qquad g(x)=\sqrt{\frac{2\pi x}{\alpha}}I_{1/(2\alpha)}\left(\frac{x^\alpha}{\alpha}\right),
    \end{split}
\end{equation}
where $\alpha=(\mu+2)/2$, and $I_\nu$ and $K_\nu$ are the modified Bessel functions of the first and second kind respectively. The functions $f(x)$ and $g(x)$ are analogues of the well known Airy functions Ai$(x)$ and Bi$(x)$, reducing to scaled versions of these when $\alpha=3/2$ (or equivalently $\mu=1$).
 
Recalling that $\mu$ is positive and real, we find that
\begin{equation}\label{eq:expDecay}
    y(x)\sim kf(x)=kx^{-\mu/4}\exp\left(-\frac{2}{\mu+2}x^{(\mu+2)/2}+\textit{o}(1)\right), \ {\rm as}\ x\to+\infty,
\end{equation}
for some arbitrary constant $k$. As is well known for Airy functions, a solution $y(x)$ of $\Ptwomu$ that decays to zero as $|x|\rightarrow\infty$ may not be exponentially small in other directions. In particular, for every odd integer $\mu$, the negative real semi-axis is an anti-Stokes line \cite{bender2013advanced}, meaning that the exponential contribution is purely oscillatory as $|x|\rightarrow\infty$. Given an odd positive integer $\mu>1$, standard asymptotic techniques lead to 
 \begin{equation}\label{eq:expOscillatory}
     y(x)\sim c_1|x|^{-\mu/4}\sin\left(\frac{2}{\mu+2}|x|^{(\mu+2)/2}-c_2+\textit{o}(1)\right)\qquad\text{as}\qquad x\rightarrow-\infty,
\end{equation}
for arbitrary constants $c_1$ and $c_2$.
In the case $\mu=1$, more detailed results were described by Hastings and McLeod \cite{hastings1980boundary}.

%%%%%%%%%%%%%%%%%%%%%%%%%%%%%%%%%%%%%%%%%%%%%%%%%%%%%%%%%%%%%%%%%%%%%%%%%%%%%%%%%%%%%%%%%%%%%%%%%%%%%%%%

\section{Tritronqu\'{e}e Solutions}\label{section:TT}
The main object of study in this section is an asymptotic series, which solves $\Ptwomu$ given in Equation \eqref{eq:yf}. We prove the existence and uniqueness of a true solution that is asymptotic to this series for $|\arg(x)|\leq 2\pi/(\mu+2)$ as $|x|\rightarrow\infty$. This solution corresponds to the tritronqu\'ee solution in the case $\mu=1$. 

We refer the reader to Boutroux \cite{boutroux1913recherches} for detailed descriptions of the solutions he called {\em int\'egrales tronqu\'ees}, which are pole-free (for sufficiently large $x$) within two adjacent sectors of the form $S_n$ defined in Equation \eqref{eq:sectors}. The region $S_n\cup S_{n+1}$ ($n\in\mathbb Z$ mod 6) described by such a pair of sectors has a bisector given by a ray $\arg x=n\pi/3$. The term tronqu\'ee arises from the fact that any line of poles that originally lies on such a ray must be truncated as $|x|$ increases (due to the pole-free nature of the solution).

There are three such rays in four contiguous sectors $S_n\cup S_{n+1}\cup S_{n+2}\cup S_{n+3}$ ($n\in\mathbb Z$ mod 6). Boutroux showed that, for each choice of such a region, there exists a unique solution he called {\em tritronqu\'ee} (triply-truncated), which is asymptotically pole-free in the region. For $\Ptwomu$, these special rays are given by arg$(x)=n\pi/(\mu+2)$ for $n\in\mathbb{Z}$. 

We deduce an asymptotic series expansion of a tritronqu\'ee solution of $\Ptwomu$ in Proposition \ref{thrm:formalseries}. In Propositions \ref{prop:tronquee} and \ref{prop:tritronquee} we prove the existence and uniqueness of such a solution for arbitrary $\mu\in\mathbb{Z}^+$. 
The proof of existence relies on Wasow's theorem. The further step required for uniqueness is analogous to the proof given by Joshi and Kitaev in \cite{joshi2001boutroux} for the first Painlev\'e equation. We start by giving a  generalisation of Boutroux's transformation of variables.

\begin{definition}[Boutroux variables] \label{def:boutrouxvars}
Let $y(x)$ be a solution of $\Ptwomu$. We define new variables $u(z)$ and $z$ by
\begin{equation*}
    y(x)=x^{\mu/2}u(z)\qquad\text{and}\qquad z=\frac{2}{\mu+2}x^{(\mu+2)/2}.
\end{equation*}
\end{definition}
Substitution into $\Ptwomu$ shows that the function $u(z)$ satisfies
\begin{equation} \label{eq:boutrouxODE}
    u''(z)=2u(z)^3+u(z)-\frac{3\mu}{\mu+2}\frac{u'(z)}{z}-\frac{\mu(\mu-2)}{(\mu+2)^2}\frac{u(z)}{z^2}.
\end{equation}
In the following proposition, we consider its asymptotic expansion as $|z|\to\infty$. 
\begin{proposition} \label{thrm:formalseries}
    The formal series
    \begin{equation*}
        u_f(z)=\sum_{n=0}^{\infty} \frac{a_{n}}{z^{2n}},
    \end{equation*}
    where $a_0=i/\sqrt{2}$, $a_1=-a_0\frac{\mu(\mu-2)}{2(\mu+2)^2}$, and the coefficients $a_n$, for all $n>1$, are given by the recurrence relation
    \begin{equation*}
        \begin{split}
            a_n&=\sigma(n)\,a_{n-1}+a_0\sum_{k=1}^{n-1}a_ka_{n-k}+\sum_{j=1}^{n-1}\sum_{k=0}^{n-j}a_ja_ka_{n-j-k},\\
            \text{where}\qquad\sigma(n)&=-(2n-1)(n-1)+\frac{3\mu}{\mu+2}(n-1)-\frac{\mu(\mu-2)}{2(\mu+2)^2},
        \end{split}
    \end{equation*}
    is a formal solution of Equation (\ref{eq:boutrouxODE}).
\end{proposition}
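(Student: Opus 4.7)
The plan is a direct formal coefficient comparison: substitute $u_f(z)=\sum_{n\geq 0}a_n z^{-2n}$ into Equation~(\ref{eq:boutrouxODE}) and impose that the coefficient of each power $z^{-2n}$ vanishes. Every term in the equation is either linear in the series (so contributes a shifted coefficient multiplied by a rational function of $n$ and $\mu$) or, in the case of $2u_f^3$, a Cauchy triple product $2\sum_{i+j+k=n}a_ia_ja_k$.

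At order $z^0$ the only contributions come from $2u_f^3$ and $u_f$, giving $2a_0^3+a_0=0$, which fixes the branch $a_0=i/\sqrt{2}$. At order $z^{-2}$ the additional contributions from $u_f''$, $u_f'/z$ and $u_f/z^2$ combine with the cubic convolution and the normalisation $a_0^2=-1/2$ to determine $a_1$ in the stated form. For general $n\geq 2$, the key algebraic observation is that $a_n$ appears in the cubic convolution only through the three cyclic permutations of $(n,0,0)$, each producing $a_0^2 a_n$; with the prefactor $2$ this yields $6a_0^2a_n=-3a_n$. Together with the $+a_n$ from the linear $u_f$ term, the net coefficient of $a_n$ in the equation is nonzero (equal to $-2$) precisely because of the normalisation $a_0^2=-1/2$, so $a_n$ can be solved for in terms of $a_0,\dots,a_{n-1}$.

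The remaining triples in the cubic sum split naturally according to whether one of their indices is zero: those with a zero index combine into $a_0\sum_{k=1}^{n-1}a_ka_{n-k}$, while those with all indices positive form $\sum_{j=1}^{n-1}\sum_{k=0}^{n-j}a_ja_ka_{n-j-k}$, reproducing the two cubic pieces in the proposition. Collecting the coefficients of $a_{n-1}$ that arise from $u_f''$, $u_f'/z$ and $u_f/z^2$ into a single symbol $\sigma(n)$ after dividing by $2$ gives the stated polynomial in $n$ and $\mu$, completing the recurrence.

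There is no genuine obstacle; the argument is essentially combinatorial bookkeeping. The principal care needed is in the clean isolation of $a_n$ from the cubic convolution (invoking $a_0^2=-1/2$ only at that moment) and in consistent sign-tracking when rearranging the equation into the form $a_n=\sigma(n)a_{n-1}+\cdots$. I would therefore keep the triple sum symbolic throughout the main computation and perform the split into the two sub-sums as the last step, so that the correspondence with the proposition's recurrence is immediate.
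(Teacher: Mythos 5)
Your proposal is correct and coincides with the paper's proof, which is exactly the direct substitution and coefficient comparison you describe (the paper states it in one line). Your bookkeeping checks out: the three triples $(n,0,0),(0,n,0),(0,0,n)$ contribute $6a_0^2a_n=-3a_n$, which together with the $+a_n$ from the linear term gives the invertible coefficient $-2$, and the remaining terms reproduce $\sigma(n)$ and the two cubic sub-sums as stated.
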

\begin{proof}
    Substitution of the given series into Equation (\ref{eq:boutrouxODE}).
\end{proof}
This gives rise to a formal series solution of $\Ptwomu$ in the original variables:
\begin{equation} \label{eq:yf}
    y_f(x)=\sqrt{x^{\mu}}\sum_{n=0}^{\infty}\frac{\left(\frac{\mu+2}{2}\right)^{2n}a_n}{x^{(\mu+2)n}},
\end{equation}
with coefficients $a_n$ described in Proposition \ref{thrm:formalseries}.

\begin{remark}\label{rem:reality}
Recalling that $a_0$ is purely imaginary, we note that for $x$ on the negative real semi-axis, $y_f(x)$ may only be real-valued for the case when $\mu$ is odd. 
\end{remark}

\subsection{Existence}
We now prove the existence of true solutions asymptotic to the given power-series in a given sector of angular width up to $2\pi/(\mu+2)$ in the following proposition.
\begin{proposition}[tronqu\'ee solutions] \label{prop:tronquee}
    Given $\mu\in\mathbb{Z}^+$ and a sector $\mathcal{S}$ of angle less than $2\pi/(\mu+2)$, there exists a solution of $\Ptwomu$ whose asymptotic behaviour as $|x|\rightarrow\infty$ in $\mathcal{S}$ is given by the asymptotic series (\ref{eq:yf}).
\end{proposition}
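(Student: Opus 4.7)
The plan is to apply Wasow's theorem on the existence of true solutions asymptotic to formal power-series solutions of nonlinear ODEs at a rank-one irregular singular point at infinity. Passing to the Boutroux variables of Definition~\ref{def:boutrouxvars}, the target sector $\mathcal{S}$ of angular width less than $2\pi/(\mu+2)$ in $x$ becomes, via $z=\frac{2}{\mu+2}x^{(\mu+2)/2}$, a sector $\widetilde{\mathcal{S}}$ of angular width strictly less than $\pi$ in $z$. It then suffices to construct a true solution of the Boutroux equation~\eqref{eq:boutrouxODE} on $\widetilde{\mathcal{S}}$ asymptotic to the formal series $u_f$ of Proposition~\ref{thrm:formalseries}, since the required solution of $\Ptwomu$ is recovered from it via $y(x)=x^{\mu/2}u(z)$.

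To this end, I would fix a large integer $N$, truncate $u_N(z)=\sum_{n=0}^{N}a_n z^{-2n}$, and substitute $u=u_N+\eta$ into~\eqref{eq:boutrouxODE}. Using the identities $2a_0^3+a_0=0$ and $6a_0^2+1=-2$, the equation for the remainder takes the schematic form
\begin{equation*}
    \eta''(z)=-2\eta+b(z)\eta+c(z)\eta'+Q(z,\eta)+r_N(z),
\end{equation*}
where $b(z)=O(z^{-2})$ and $c(z)=O(z^{-1})$ as $|z|\to\infty$, the residual $r_N(z)=O(z^{-2N-2})$ measures the failure of $u_N$ to solve~\eqref{eq:boutrouxODE} exactly, and $Q(z,\eta)=6u_N\eta^2+2\eta^3$ is at least quadratic in $\eta$. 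Writing this as a first-order system for $V=(\eta,\eta')^{T}$ yields
\begin{equation*}
    V'(z)=A_{0}V+\frac{A_{1}(z)}{z}V+F(z,V)+G(z),
\end{equation*}
with $A_{0}=\bigl(\begin{smallmatrix}0 & 1\\-2 & 0\end{smallmatrix}\bigr)$ having simple eigenvalues $\pm i\sqrt{2}$, forcing $G=O(z^{-2N-2})$, and nonlinearity $F$ at least quadratic in $V$.

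Wasow's theorem applies directly to this system and delivers a true holomorphic solution $\eta$ on $\widetilde{\mathcal{S}}$ asymptotic to the formal tail $u_f-u_N$; assembling $u=u_N+\eta$ and transforming back then produces the required solution of $\Ptwomu$ on $\mathcal{S}$. Equivalently, one can invert the constant-coefficient part using the fundamental matrix of $V'=A_{0}V$, rewrite the system as a Volterra-type integral equation along a contour tending to infinity in $\widetilde{\mathcal{S}}$, and close a standard contraction on a Banach space of holomorphic $V$ on $\widetilde{\mathcal{S}}$ decaying like $z^{-2N-1}$. The main obstacle is the choice of integration contour and the associated uniform bounds on the fundamental matrix: with purely imaginary eigenvalues $\pm i\sqrt{2}$, the Stokes rays of the linearisation lie along the real $z$-axis, and exactly one of the modes $e^{\pm i\sqrt{2}z}$ is uniformly subdominant throughout any subsector of $\{\arg z\in(0,\pi)\}$ or $\{\arg z\in(-\pi,0)\}$ of opening strictly less than $\pi$. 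This subdominance is precisely what makes the Volterra integrals convergent and the contraction close, and it accounts for the $2\pi/(\mu+2)$ threshold in the statement. Uniqueness is not asserted here; it will be addressed in Proposition~\ref{prop:tritronquee} via the wider-sector argument analogous to Joshi and Kitaev~\cite{joshi2001boutroux}.
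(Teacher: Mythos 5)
Your proposal is correct and follows essentially the same route as the paper: pass to the Boutroux variables, recentre the equation about the constant term $a_0=i/\sqrt{2}$, rewrite it as a first-order system whose linearisation at infinity has the nonzero eigenvalues $\pm i\sqrt{2}$, and invoke Wasow's existence theorem on a $z$-sector of opening less than $\pi$ (equivalently an $x$-sector of opening less than $2\pi/(\mu+2)$). The only cosmetic difference is that you subtract an $N$-term truncation $u_N$ rather than just $a_0$ and sketch the underlying integral-equation argument, whereas the paper subtracts only the constant term and cites Wasow's Theorem 12.1 directly.
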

\begin{proof}
    Let $y(x)$ be a solution of $\Ptwomu$. We use the change of variables given in Definition \ref{def:boutrouxvars} and further define
    \begin{equation*}
        w(z)=u(z)-a_0=u(z)-i/\sqrt{2}.
    \end{equation*}
    Then a straightforward calculation shows that the function $w(z)$ satisfies
    \begin{equation} \label{eq:wODE}
        w''(z)=2w(z)^3+3i\sqrt{2}w(z)^2-2w(z)-\frac{3\mu}{\mu+2}\frac{w'(z)}{z}-\frac{\mu(\mu-2)}{(\mu+2)^2}\frac{(w(z)+i/\sqrt{2})}{z^2},
    \end{equation}
    and this ODE admits the formal power-series solution
    \begin{equation*}
        w_f(z)=\sum_{n=1}^{\infty} \frac{a_{n}}{z^{2n}}.
    \end{equation*}
    Now, by defining variables $w_1=w$ and $w_2=w'$, Equation (\ref{eq:wODE}) becomes the system
    \begin{equation*}
        \begin{split}
            w_1'&=w_2,\\
   w_2'&=2w_1^3+3i\sqrt{2}w_1^2-2w_1-\frac{3\mu}{\mu+2}\frac{w_2}{z}-\frac{\mu(\mu-2)}{(\mu+2)^2}\frac{(w_1+i/\sqrt{2})}{z^2}.
        \end{split}
    \end{equation*}
    Let $f_1=w_2$ and $f_2$ be the right side of the second equation. This system is, therefore, in the form
    \begin{equation*}
        z^{-q}\frac{d\boldsymbol{w}}{dz}=\boldsymbol{f}(z,w_1,w_2),
    \end{equation*}
    where $\boldsymbol{w}=(w_1, w_2)^T$ $\boldsymbol{f}=(f_1, f_2)^T$ and $q=0$. The Jacobian of $\boldsymbol{f}$, evaluated at $\boldsymbol{w}=\boldsymbol{0}$ and $|z|\rightarrow\infty$, has nonzero eigenvalues $\lambda_{\pm}=\pm i\sqrt{2}$. These results allow us to apply Wasow's theorem  \cite[Theorem 12.1]{wasow2018asymptotic}, showing the existence of a true solution with the desired asymptotic behaviour.
\end{proof}
\begin{definition}
    The solutions defined by Proposition \ref{prop:tronquee} are called tronqu\'{e}e solutions.
\end{definition}
\begin{corollary}
    Let $\mu\in\mathbb{Z}^+$ and $\mathcal{S}\in\mathbb C$ be a given sector of angle less than $2\pi/(\mu+2)$. Suppose $y(x)$ is a tronqu\'{e}e solution of $\Ptwomu$ in $\mathcal S$. Then the $k$th derivative, $y^{(k)}(x)$, for $k\in\{0,1,2, \ldots\}$, has asymptotic behaviour given by the $k$th term-by-term differentiation of series (\ref{eq:yf}), i.e., $y_f^{(k)}(x)$, as $|x|\rightarrow\infty$ in $\mathcal{S}$. 
\end{corollary}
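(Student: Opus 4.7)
The plan is to prove the corollary by induction on $k$, using Wasow's theorem to obtain the cases $k=0,1$ and the differential equation $\Ptwomu$ itself to propagate to higher derivatives. The case $k=0$ is the content of Proposition \ref{prop:tronquee}. For $k=1$, the key observation is that in the proof of Proposition \ref{prop:tronquee} Wasow's theorem is applied to the first-order two-dimensional system whose dependent variables are $(w_1,w_2)=(w,w')$. Its conclusion therefore delivers asymptotic expansions for both components simultaneously, that is, $w(z)\sim w_f(z)$ and $w'(z)\sim w_f'(z)$ as $|z|\to\infty$ in the relevant $z$-sector. Undoing the Boutroux change of variables via the chain rule,
\begin{equation*}
y'(x)=\tfrac{\mu}{2}\,x^{\mu/2-1}u(z)+x^{\mu}u'(z),
\end{equation*}
then yields $y'(x)\sim y_f'(x)$ in $\mathcal{S}$; the restriction that $\mathcal{S}$ has angular extent less than $2\pi/(\mu+2)$ is exactly what ensures that its image under $x\mapsto z=\frac{2}{\mu+2}x^{(\mu+2)/2}$ lies in a sector of angular extent less than $\pi$, as required by Wasow's theorem.

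For the inductive step, assume $y^{(j)}(x)\sim y_f^{(j)}(x)$ in $\mathcal{S}$ for all $0\le j\le k-1$, with $k\ge 2$. Differentiating the polynomial identity $y''=2y^3+x^\mu y$ a total of $k-2$ times by the Leibniz rule expresses $y^{(k)}$ as a polynomial in $x$, $y$, $y'$, \ldots, $y^{(k-1)}$, whose coefficients are themselves polynomials in $x$. Substituting the inductive asymptotic expansions produces an asymptotic expansion for $y^{(k)}$. Applying the same algebraic manipulation formally to the identity $y_f''=2y_f^3+x^\mu y_f$, which $y_f$ satisfies by construction (see Proposition \ref{thrm:formalseries}), shows that the resulting expansion coincides with $y_f^{(k)}$, the term-by-term $k$th derivative of \eqref{eq:yf}.

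The only delicate point is the case $k=1$. A general asymptotic expansion cannot be differentiated term-by-term, but Wasow's theorem circumvents this difficulty by producing the asymptotic for the full solution vector of the associated first-order system rather than for its first component alone. Once that case is secured, the induction is purely algebraic, because $\Ptwomu$ has a polynomial right-hand side and asymptotic expansions behave well under polynomial operations; no further analytic input is needed.
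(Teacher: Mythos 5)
Your proof is correct, but it takes a genuinely different route from the paper's. The paper's entire argument is two sentences: the asymptotic behaviour forces $y$ to be holomorphic in $\mathcal S$ for large $|x|$, and then Wasow's general theorem on term-by-term differentiation of asymptotic expansions of holomorphic functions in sectors (\cite[Theorem 8.8]{wasow2018asymptotic}) gives the conclusion for every $k$ at once. You instead avoid that differentiation theorem entirely: you extract the $k=1$ case from the fact that \cite[Theorem 12.1]{wasow2018asymptotic} is applied in Proposition \ref{prop:tronquee} to the two-dimensional system in $(w,w')$ and therefore already delivers the expansion of the second component (which is the formal derivative of the first, since the formal vector solution satisfies $w_1'=w_2$), and you then bootstrap to all higher $k$ by repeatedly differentiating $y''=2y^3+x^\mu y$ and using closure of asymptotic expansions under polynomial operations, with the formal identity $y_f''=2y_f^3+x^\mu y_f$ guaranteeing that the result is the term-by-term derivative. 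Both arguments are sound. The paper's is shorter and applies to any holomorphic solution possessing the expansion, irrespective of how it was constructed; yours is more self-contained and makes explicit why $k=1$ is the only analytically delicate case, but it leans on the tronqu\'ee solution being precisely one produced by the system-level Wasow construction (which is consistent with the paper's definition of tronqu\'ee solutions as ``the solutions defined by Proposition \ref{prop:tronquee}''). Two small cosmetic points: the Leibniz step actually expresses $y^{(k)}$ in terms of derivatives only up to order $k-2$, not $k-1$; and in your chain-rule display the coefficient of $u'(z)$ is $x^{\mu/2}\cdot\frac{dz}{dx}=x^{\mu}$, which you have stated correctly.
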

\begin{proof}
    The asymptotic behaviour of $y(x)$ implies that it is holomorphic in $\mathcal{S}$ for sufficiently large $x$. The proof follows from Wasow's theorem \cite[Theorem 8.8]{wasow2018asymptotic}.
\end{proof}

\subsection{Uniqueness} In the following proposition, we prove that there is a unique true solution asymptotic to the given power-series in $|\text{arg}(x)|\leq2\pi/(\mu+2)$.

\begin{proposition}[tritronqu\'ee solution]\label{prop:tritronquee}
    Given $\mu\in\mathbb{Z}^+$, there exists a unique solution $Y(x)$ of $\Ptwomu$ that has the asymptotic expansion $y_f(x)$ in the sector $|\arg(x)|\leq2\pi/(\mu+2)$ as $|x|\rightarrow\infty$.
\end{proposition}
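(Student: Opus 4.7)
Existence of a solution asymptotic to $y_f(x)$ in any open subsector of $\mathcal{T}:=\{x\in\mathbb{C}:|\arg x|\leq 2\pi/(\mu+2)\}$ already follows from Proposition~\ref{prop:tronquee}, so what remains to establish is uniqueness in the maximal closed sector $\mathcal{T}$. The plan follows the strategy used for $\Pone$ in \cite{joshi2001boutroux}. Suppose $Y_1(x)$ and $Y_2(x)$ are two such solutions. Passing to the Boutroux variables of Definition~\ref{def:boutrouxvars}, the corresponding $u_1(z),u_2(z)$ solve Equation \eqref{eq:boutrouxODE} and are both asymptotic to $u_f(z)$ in the closed sector $|\arg z|\leq\pi$ (which, apart from the negative real axis, exhausts the $z$-plane at infinity). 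Hence their difference $v(z):=u_1(z)-u_2(z)$ is \emph{beyond all orders small}: $v(z)=O(|z|^{-N})$ as $|z|\to\infty$ in this sector, for every $N\in\mathbb{N}$.

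Subtracting the two copies of Equation~\eqref{eq:boutrouxODE} and factoring $u_1^3-u_2^3$ produces the linear homogeneous ODE
\begin{equation*}
v''(z)+\frac{3\mu}{\mu+2}\frac{v'(z)}{z}-\Bigl[\,2\bigl(u_1^2+u_1u_2+u_2^2\bigr)+1-\frac{\mu(\mu-2)}{(\mu+2)^2\,z^2}\Bigr]v(z)=0.
\end{equation*}
Because $u_i\to a_0=i/\sqrt{2}$, the bracketed coefficient tends to $2\cdot 3a_0^2+1=-2$ as $|z|\to\infty$, so the equation is an $o(1)$-perturbation of $v''+2v=0$. The next step is to construct, via Liouville--Green analysis (or Wasow applied to the associated first-order system with a frozen leading value $a_0$), a fundamental pair
\begin{equation*}
v_\pm(z)\sim z^{-3\mu/(2(\mu+2))}\exp\bigl(\pm i\sqrt{2}\,z\bigr)\bigl(1+O(z^{-1})\bigr),\qquad |z|\to\infty.
\end{equation*}

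Every solution of the variational equation decomposes as $v=c_+v_++c_-v_-$. On the positive real $z$-axis, which lies inside $\mathcal T$, the two basis elements are purely oscillatory with a common algebraically-decaying envelope $z^{-3\mu/(2(\mu+2))}$; consequently any nontrivial linear combination decays no faster than algebraically, as can be made quantitative by computing the Wronskian $W[v_+,v_-]$ and expressing $c_\pm$ as contour integrals of $v,v'$ against $v_\mp,v_\mp'$. The beyond-all-orders smallness of $v$ on this ray therefore forces $c_+=c_-=0$. Vanishing of $v$ on a ray, together with analyticity of the ODE's coefficients for $|z|$ large, yields $v\equiv 0$, whence $u_1\equiv u_2$ and $Y_1\equiv Y_2$.

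The principal technical obstacle is the construction of the fundamental pair $v_\pm$ with the stated asymptotics, because the coefficients of the variational equation depend on $u_1,u_2$ and are themselves only controlled through their asymptotic expansions. The way I would handle this is to truncate the expansions of $u_i$ at sufficiently high order, apply Wasow's theorem to the resulting frozen equation to obtain $v_\pm$ with the asserted leading behaviour, and then absorb the residual coefficient errors by a contraction-mapping argument in a weighted function space on the positive real ray. A subsidiary delicate point is justifying that the pair $v_\pm$, though individually subject to Stokes phenomena elsewhere in $\mathcal T$, do provide a genuine basis of the solution space on the ray where the decay argument is run; this follows from the non-vanishing of their Wronskian, which to leading order is a nonzero constant.
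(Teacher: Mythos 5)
Your proposal is correct and follows essentially the same route as the paper: the difference of two candidate solutions is beyond all orders small on the positive real axis yet satisfies a linear second-order variational ODE whose solutions there are oscillatory with only an algebraic envelope, which forces both connection constants to vanish. The one technical obstacle you flag --- rigorously constructing the fundamental pair $v_\pm$ --- is dispatched in the paper by staying in the original variable, writing the variational equation as $v''=(-f(x)+g(x))v$ with $f=2x^\mu$ and $g=\mathcal{O}(x^{-2})$, and citing Olver's error-bounded Liouville--Green theorem \cite[Theorem 2.2]{olver1997asymptotics}, which directly supplies the basis $f^{-1/4}\exp\bigl(\pm i\int_x^\infty f^{1/2}\,dx\bigr)\bigl(1+\textit{o}(1)\bigr)$ with no further contraction-mapping argument needed.
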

\begin{proof}
    Let $0<\epsilon\in\mathbb{R}$. Let $y_1(x)$ and $y_2(x)$ be tronqu\'{e}e solutions as defined by Proposition \ref{prop:tronquee} in the respective sectors
    \begin{equation*}
        S_1:-\epsilon/2<\text{arg}(x)<\theta-\epsilon \qquad\text{and}\qquad S_2:-\theta+\epsilon<\text{arg}(x)<\epsilon/2,
    \end{equation*}
    where $\theta=2\pi/(\mu+2)$. Consider solutions $y_1(x)$ and $y_2(x)$ in the overlapping sector of angular width $\epsilon$, centered about the positive real axis. Since the two tronqu\'{e}e solutions have the same asymptotic expansion, we get
    \begin{equation} \label{eq:asympv}
        v(x):=y_1(x)-y_2(x)\underset{|\text{arg}(x)|<\epsilon/2}{\underset{|x|\rightarrow\infty}{=}}\textit{o}\left(\frac{\sqrt{x^\mu}}{x^{(\mu+2)n}}\right),
    \end{equation}
    for every $n\in\{0,1,2, \ldots\}$. Furthermore, we have $v''(x)=y_1''(x)-y_2''(x)$, and by applying $\Ptwomu$, we see that $v(x)$ satisfies the linear ODE
    \begin{equation*}
        v''(x)=(2y_1(x)^2+2y_1(x)y_2(x)+2y_2(x)^2+x^\mu)v=(-f(x)+g(x))v,
    \end{equation*}
    where
    \begin{equation*}
        f(x)=2x^\mu\qquad\text{and}\qquad g(x)=2y_1(x)^2+2y_1(x)y_2(x)+2y_2(x)^2+3x^\mu.
    \end{equation*}
    The functions $f$ and $g$ are holomorphic for sufficiently large $x$ in the sector $S_1\cap S_2$. It follows from asymptotic series (\ref{eq:yf}) that $g$ is $\mathcal{O}(1/x^2)$, and thus the integral
    \begin{equation*}
        \int_{x}^{\infty}\left|f^{-1/4}\left(f^{-1/4}\right)''-gf^{-1/2}\right|dx,
    \end{equation*}
    taken along the real axis converges. Therefore, we may apply Olver's theorem \cite[Theorem 2.2]{olver1997asymptotics} to obtain
    \begin{align*}
        v(x)=&c_1f^{-1/4}\exp\left(i\int_{x}^{\infty}f^{1/2}dx\right)\bigl(1+\textit{o}(1)\bigr)\\
        &+c_2f^{-1/4}\exp\left(-i\int_{x}^{\infty}f^{1/2}dx\right)\bigl(1+\textit{o}(1)\bigr),
    \end{align*}
    where $c_1$ and $c_2$ are some constants. For real $x$, this contradicts the asymptotic behaviour (\ref{eq:asympv}) unless $c_1=c_2=0$. So it follows that $y_1(x)=y_2(x)$ for sufficiently large real $x$.
\end{proof}
\begin{definition}
    The solution defined by Proposition \ref{prop:tritronquee} is called the tritronqu\'ee solution $Y(x)$.
\end{definition}

\begin{remark} Note that the sector in which $Y$ is asymptotically free of poles is symmetric around the positive real semi-axis.
\end{remark}

Given $Y(x)$, defined above, we may generate other tritronqu\'ee solutions by using the discrete symmetry of $\Ptwomu$. The functions
\begin{equation}\label{eq:symmetry}
    Y_n(x)=\exp{\left(-\frac{2\pi\,i\,n}{\mu+2}\right)}Y\left(\exp{\left(-\frac{2\pi\,i\,n}{\mu+2}\right)}x\right),
\end{equation}
for $n\in\mathbb{Z}$, also satisfy $\Ptwomu$ and are tritronqu\'ee in rotated sectors of width $4\pi/(\mu+2)$. This transformation gives a tritronqu\'ee solution corresponding to each ray of angle $2\pi n/(\mu+2)$.

%%%%%%%%%%%%%%%%%%%%%%%%%%%%%%%%%%%%%%%%%%%%%%%%%%%%%%%%%%%%%%%%%%%%%%%%%%%%%%%%%%%%%%%%%%%%%%%%%%%%%%%%

\section{The Hastings-McLeod-Type Solution}\label{section:HM}

In this section, we define the generalised Hastings-McLeod solution via a boundary-value problem on the real-line. We then prove that the existence of this solution fails when $\mu$ is even while it exists and is unique when $\mu$ is odd.

\begin{definition}[Hastings-McLeod-type solution]\label{def:HM}
    Given $\mu\in\mathbb{Z}^+$, for $x\in\mathbb{R}$, we call a solution $y(x)$ of $\Ptwomu$ a Hastings-McLeod-type solution if and only if $y(x)\sim y_f(x)$ as $x\rightarrow-\infty$ (see Equation (\ref{eq:yf})), and $y(x)\sim kf(x)$ as $x\rightarrow+\infty$ (see Equations (\ref{eq:genAi})) for some $k\in\mathbb{R}^+$.
\end{definition}
We now prove the main result of this section.
\begin{theorem}\label{thrm:HM}
    For every odd integer $\mu>1$, there exists a unique, Hastings-McLeod-type solution $y(x)$ of $\Ptwomu$. On the other hand, for every even integer $\mu>1$, no Hastings-McLeod-type solution of $\Ptwomu$ exists.
\end{theorem}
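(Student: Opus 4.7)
The plan is to split the theorem into its two contrasting parts. The even $\mu$ case is immediate from Remark~\ref{rem:reality}: when $\mu$ is even, $x^\mu > 0$ for every real $x \neq 0$, so $\sqrt{x^\mu}$ is real, and the leading term $a_0\sqrt{x^\mu}=(i/\sqrt{2})\sqrt{x^\mu}$ of $y_f(x)$ is purely imaginary; no real-valued solution of $\Ptwomu$ can be asymptotic to $y_f(x)$ as $x\to-\infty$. I would therefore concentrate on odd $\mu>1$ and use a shooting argument from $+\infty$. A standard contraction-mapping argument based on the linearisation $\phi''=x^\mu\phi$ (whose solutions are $f$ and $g$ from Equation~(\ref{eq:genAi})) produces, for each $k\in\mathbb R$, a unique solution $y_k$ of $\Ptwomu$ with $y_k(x)\sim k f(x)$ as $x\to+\infty$; I would continue each $y_k$ leftward on its maximal interval of existence.

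The central monitor is the energy
\begin{equation*}
    E(x) = y'(x)^2 - y(x)^4 - x^\mu y(x)^2, \qquad E'(x) = -\mu x^{\mu-1} y(x)^2.
\end{equation*}
Since $\mu-1$ is even, $E$ is non-increasing on $\mathbb R$; combined with $E(x)\to 0$ at $+\infty$ (from the exponential decay of $y_k$), this gives $E\geq 0$ wherever $y_k$ exists. I would then split the positive shooting parameters into
\begin{equation*}
    K^+ = \{k>0 : y_k \text{ has a pole on } \mathbb R\}, \qquad K^- = \{k>0 : y_k \text{ is defined on } \mathbb R \text{ and obeys } (\ref{eq:expOscillatory})\}.
\end{equation*}
Both are open by continuous dependence on $k$. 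For small $k$, $y_k$ stays close to the Airy-type linear combination $k f(x)$, which oscillates as $x\to-\infty$, so $K^-\neq\emptyset$; for large $k$, the cubic term $2y^3$ overwhelms the linear term near $x=0$ and forces blow-up within a finite distance, so $K^+\neq\emptyset$. Hence $(0,\infty)\setminus(K^+\cup K^-)$ is non-empty, and any $k^*$ in it produces a globally defined, non-oscillatory solution $y_{k^*}$. Using $E\geq 0$ together with its monotonicity one shows $y_{k^*}(x)^2+x^\mu/2\to 0$ as $x\to-\infty$, giving the leading asymptotic $y_{k^*}\sim\sqrt{-x^\mu/2}$, and a bootstrap in the recurrence of Proposition~\ref{thrm:formalseries} matches $y_{k^*}$ to the full series $y_f$; positivity on $\mathbb R$ follows because any zero of $y_{k^*}$ would, via $E\geq 0$, force the oscillatory branch.

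For uniqueness I would use a convexity argument. If $y_1,y_2$ are both Hastings-McLeod-type solutions then $v=y_1-y_2$ satisfies the linear ODE $v''=P(x)v$ with
\begin{equation*}
    P(x) = 2\bigl(y_1(x)^2 + y_1(x) y_2(x) + y_2(x)^2\bigr) + x^\mu.
\end{equation*}
One first shows $P\geq 0$ on $\mathbb R$: for $x\geq 0$ it is trivial; for $x\ll 0$ it follows from $y_i^2\sim -x^\mu/2$, which yields $P\sim -2x^\mu>0$; in the intermediate region positivity of $y_i$ together with the bound $E\geq 0$ bridges the two extremes. Then $(v^2)''=2Pv^2+2(v')^2\geq 0$, so $v^2$ is convex on $\mathbb R$. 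Since both $y_i\to 0$ at $+\infty$ and both match $y_f$ to all orders at $-\infty$, $v^2\to 0$ at $\pm\infty$; a convex non-negative function on $\mathbb R$ that vanishes at both infinities is identically zero, so $v\equiv 0$.

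I expect the main obstacle to be pinning down the precise asymptotic of the separator solution $y_{k^*}$: ruling out behaviours intermediate between oscillation and blow-up and showing that $y_{k^*}$ genuinely tracks the equilibrium curve $y=\sqrt{-x^\mu/2}$ on the negative semi-axis. This step demands a careful phase-plane analysis that couples $E\geq 0$, its monotonicity, and comparison against the ``adiabatic'' equilibria $y=\pm\sqrt{-x^\mu/2}$, together with a robust verification that the openness of $K^+$ and $K^-$ in $(0,\infty)$ really does cover every non-separator parameter.
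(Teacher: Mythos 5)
Your broad strategy for existence (shooting from $+\infty$, monitored by the energy $E$) is the paper's, and your reality argument for even $\mu$ is a legitimate shortcut: the paper instead proves (Lemma \ref{lem:evenmu}) that for even $\mu$ every $y_k$, $k>0$, is convex, stays positive and blows up at a finite $x$, which is stronger; your version additionally needs the (easy) observation that the unique solution asymptotic to $kf$ with $k\in\mathbb{R}^+$ is real-valued, which follows from the integral equation of Lemma \ref{lem:inteq}. However, two steps of your argument have genuine gaps.

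First, your set $K^-$ (``$y_k$ is defined on $\mathbb R$ and obeys \eqref{eq:expOscillatory}'') is not ``open by continuous dependence on $k$'': continuous dependence holds only on compact intervals and transfers neither global existence nor asymptotic behaviour as $x\to-\infty$ to nearby $k$. The paper's corresponding set is $S_2=\{k: y_k(x_0)<0 \text{ for some } x_0\}$, which is open for the trivial reason that negativity at a single point is an open condition; its non-emptiness comes not from tracking the linear oscillation to $-\infty$ but from a concavity comparison with the equilibrium curve $p(x)=\sqrt{-x^\mu/2}$ (for small $k$ one arranges $y_k(x_0)<p(x_0)$ and $p'(x_0)<y_k'(x_0)<0$ at some $x_0<0$, whence $y_k''<0$ to the left and $y_k$ must cross zero). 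This choice of sets also dissolves the ``main obstacle'' you flag: the separator value $k^*$ yields a solution that is automatically global and positive; decay to zero at $-\infty$ is excluded because decaying solutions oscillate by \eqref{eq:expOscillatory} and would have to vanish, leaving $y\sim y_f$ as the only alternative. Note also that openness of $K^+$ is not automatic either; the paper's Lemma \ref{lem:oddmublowupopen} shows that blow-up persists for nearby $k$ by re-deriving the differential inequality $y_k''>y_k^3$ on a fixed compact interval, precisely because continuous dependence alone does not suffice.

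Second, your uniqueness proof rests on $P(x)=2\bigl(y_1(x)^2+y_1(x)y_2(x)+y_2(x)^2\bigr)+x^\mu\ge0$ on all of $\mathbb R$. This is clear for $x\ge0$ and for $x\to-\infty$, but the intermediate region is exactly the hard part, and the appeal to $E\ge0$ does not close it: $E\ge0$ bounds $y'^2$ below by $y^2(y^2+x^\mu)$ and yields no lower bound on $y_i(x)^2$ of order $|x|^\mu$, which is what nonnegativity of $P$ demands for moderate negative $x$. The paper takes a different route: a comparison lemma (Lemma \ref{lemma:k1bigk2}: $k_1>k_2$ forces $y_{k_1}>y_{k_2}$ everywhere, proved with the monotone energy $V_k$), followed by Olver's theorem at $-\infty$ to write the difference of two Hastings-McLeod-type solutions as a combination of exponentially growing and decaying terms whose coefficients must both vanish. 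If you could establish a bound such as $y_{k^*}(x)^2\ge -x^\mu/2$ for $x<0$ (true for the classical Hastings-McLeod solution but not proved here), your convex-$v^2$ argument would be an elegant alternative; as written it is incomplete.
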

\begin{remark}
As discussed in Remark \ref{rem:reality}, the formal solution $y_f(x)$, for real $x<0$, may only be real-valued for odd $\mu$. For even values of $\mu$, the corresponding Hastings-Mcleod-type solution would not be real-valued and thus not mirror the distinctive properties of the original Hastings-McLeod solution.
\end{remark}

Hastings and McLeod originally tackled this problem while considering a different generalisation of $\Ptwo$, that is
\begin{equation*}
    y''(x)=2y(x)^{\alpha+1}+xy(x),
\end{equation*}
which coincides with $\Ptwomu$ when $\alpha=2$ and $\mu=1$ (i.e. $\Ptwo$). We will prove Theorem \ref{thrm:HM} by extending the approach taken by Hastings and McLeod to our case with $\alpha=2$ and more general $\mu$.

\subsection{Existence}

The proof of the existence (and non-existence when $\mu$ is even) part of Theorem \ref{thrm:HM} follows from a series of lemmas.

\begin{lemma}\label{lem:inteq}
    Given $\mu\in\mathbb{Z}^+$ and $k\in\mathbb{R}^+$, there exists a unique solution $y_k(x)$ of $\Ptwomu$ which is asymptotic to $kf(x)$ as $x\rightarrow+\infty$. At each $x$ for which the solution continues to exist as $x$ decreases, the solution and its derivatives are continuous functions of $k$. 
\end{lemma}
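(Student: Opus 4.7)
The plan is to reformulate the problem as an integral equation via variation of parameters with respect to the basis $\{f,g\}$ of solutions of the linearised equation $\phi''=x^\mu\phi$ from (\ref{eq:genAi}), and then solve this integral equation by a contraction mapping argument on a half-line $[X_0,\infty)$ for $X_0$ sufficiently large. First, I would rewrite $\Ptwomu$ as $y''-x^\mu y=2y^3$ and treat the cubic term as a perturbation. Letting $W=fg'-f'g$ (constant by Abel's identity, since the linearised equation has no first-order term), a solution that behaves like $kf$ at $+\infty$ must satisfy
\begin{equation*}
    y(x)=kf(x)+\frac{2}{W}\int_x^\infty\bigl(f(x)g(t)-g(x)f(t)\bigr)\,y(t)^3\,dt,
\end{equation*}
the lower limit $x$ and the combination $f(x)g(t)-g(x)f(t)$ being chosen exactly so that the growing mode $g(x)$ receives no contribution at infinity. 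The decisive analytic observation is that $f$ decays like $\exp\bigl(-\tfrac{2}{\mu+2}x^{(\mu+2)/2}\bigr)$ while $g$ grows at the same exponential rate, so both $g(t)f(t)^3$ and $f(t)f(t)^3$ decay exponentially fast, guaranteeing convergence of the integral whenever $y$ is comparable to $f$.

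Next, for fixed $k>0$ and for $X_0=X_0(k)$ sufficiently large, I would set up the contraction on the complete metric space
\begin{equation*}
    \mathcal{B}=\bigl\{y\in C([X_0,\infty)):\,|y(x)|\le 2kf(x)\bigr\},
\end{equation*}
equipped with the weighted sup-norm $\|y\|=\sup_{x\ge X_0}|y(x)|/f(x)$. Denoting the right-hand side above by $(Ty)(x)$, one checks using the elementary inequality $|y_1^3-y_2^3|\le 3\max(|y_1|,|y_2|)^2|y_1-y_2|$, together with the rapid decay of $f^2$, that $T$ maps $\mathcal{B}$ into itself and is a strict contraction once $X_0$ is chosen large enough. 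Banach's fixed-point theorem then produces a unique fixed point $y_k$ on $[X_0,\infty)$, and a standard ODE extension argument continues $y_k$ to its maximal interval of existence as $x$ decreases.

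Finally, continuity of $y_k(x)$ and $y_k^{(j)}(x)$ in $k$ at each $x$ of existence follows in two stages. The fixed point of $T$ depends continuously (indeed smoothly) on the parameter $k$ appearing in the inhomogeneous term $kf(x)$, which gives continuity of $y_k$ and $y_k'$ at $x=X_0$; the higher derivatives are then continuous in $k$ via the ODE. Continuity at all smaller $x$ is inherited from the standard theorem on continuous dependence of ODE solutions on initial data. The main technical obstacle I anticipate is that the threshold $X_0$ depends on $k$, so to obtain continuity at a fixed real point $x_0$ one should first restrict $k$ to a compact subinterval $[k_1,k_2]\subset\mathbb{R}^+$, verify that the contraction constants and bounds can be taken uniformly in $k$ on this range (so that a single $X_0$ works), and only then apply the continuous-dependence theorem from $x=X_0$ down to $x=x_0$.
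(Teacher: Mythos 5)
Your proposal follows essentially the same route as the paper: variation of parameters with respect to the basis $\{f,g\}$ of the generalised Airy equation yields the integral equation $y_k(x)=kf(x)+\int_x^\infty\bigl(f(x)g(t)-f(t)g(x)\bigr)y_k(t)^3\,dt$, which is then solved by a contraction mapping in a space of decaying continuous functions with a (weighted) uniform norm, with continuity in $k$ coming from the fixed-point construction and standard continuous dependence. The paper leaves the contraction details implicit as ``standard,'' whereas you spell them out (the Wronskian normalisation, the ball $|y|\le 2kf$, the uniformity in $k$ on compact subintervals), but the underlying argument is identical.
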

\begin{proof}
    By using $f(x)$ and $g(x)$ as integrating factors, we find that $y_k(x)$ must satisfy the integral equation
    \begin{equation}\label{eq:integral}
        y_k(x)=kf(x)+\int_x^\infty\left(f(x)g(t)-f(t)g(x)\right)y_k(t)^3dt,
    \end{equation}
    where $f$ and $g$ are defined in Equations (\ref{eq:genAi}). By application of a standard contraction mapping argument, in an appropriate space of decaying, continuous functions equipped with the uniform norm, Equation (\ref{eq:integral}) may be shown to have a unique solution, giving $y_k$ and its continuous dependence on $k$.
\end{proof}

\begin{lemma}\label{lem:evenmu}
    For every even $\mu\in\mathbb{Z}^+$, and every $k\in\mathbb{R}^+$, $y_k(x)$ remains positive as $x$ decreases and becomes singular at some finite value of $x$.
\end{lemma}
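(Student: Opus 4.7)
The plan exploits two consequences of $\mu$ being even: first, $x^\mu\ge 0$ on all of $\mathbb R$, so any positive solution of $\Ptwomu$ is automatically strictly convex; second, the same non-negativity makes a natural energy monotone, allowing a pointwise lower bound on $|y_k'|$ to propagate from $+\infty$ down into the interior and force finite-time blow-up.

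I would first show that $y_k>0$ and $y_k'<0$ throughout the maximal interval of existence $(x^*,\infty)$. From $y_k\sim kf$ with $k>0$, differentiated term by term, one has $y_k>0$, $y_k'<0$ and both $y_k,y_k'\to 0$ as $x\to+\infty$. On any sub-interval of $(x^*,\infty)$ on which $y_k>0$, the ODE gives $y_k''=2y_k^3+x^\mu y_k>0$, so $y_k'$ is strictly increasing; combined with the limit $y_k'\to 0$ at $+\infty$ this forces $y_k'<0$, and hence $y_k$ strictly decreasing, on the sub-interval. A finite zero is then impossible: if $x_1\in(x^*,\infty)$ were the supremum of the zero set, $y_k>0$ on $(x_1,\infty)$ would make $y_k$ strictly decreasing there, forcing $y_k(x)<y_k(x_1)=0$ just to the right of $x_1$.

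Next I would introduce the energy $E(x)=(y_k'(x))^2-y_k(x)^4$ and, using the ODE, compute
\[ E'(x)=2y_k'(x)\bigl(y_k''(x)-2y_k(x)^3\bigr)=2x^\mu y_k(x)y_k'(x). \]
For even $\mu$ we have $x^\mu\ge 0$, and since $y_k y_k'<0$ this gives $E'\le 0$. The leading-order asymptotics $y_k\sim kf$ make both $(y_k')^2$ and $y_k^4$ decay exponentially at $+\infty$, so $E(+\infty)=0$ and monotonicity yields $E(x)\ge 0$ on $(x^*,\infty)$, i.e.\ $|y_k'(x)|\ge y_k(x)^2$. Combined with $y_k'<0$ this reads $(1/y_k)'\ge 1$; integrating from any fixed $x_0\in(x^*,\infty)$ down to $x<x_0$ one obtains
\[ \frac{1}{y_k(x)}\le \frac{1}{y_k(x_0)}-(x_0-x), \]
whose right-hand side becomes non-positive at $x=x_0-1/y_k(x_0)$. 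Since $y_k$ is positive the solution cannot be continued to that value, so the maximal interval of existence is bounded below by some finite $x^*\ge x_0-1/y_k(x_0)$ and $y_k(x)\to+\infty$ as $x\to x^{*+}$.

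The crucial step, and the place where the parity hypothesis enters, is the energy monotonicity $E'\le 0$: it relies on $x^\mu$ having fixed sign on all of $\mathbb R$. For odd $\mu$, $x^\mu$ changes sign at the origin, $E'$ flips accordingly, and the propagation of $|y_k'|\ge y_k^2$ into the interior breaks down; this is consistent with the existence of the Hastings-McLeod-type solution in the odd case asserted by the other half of Theorem \ref{thrm:HM}.
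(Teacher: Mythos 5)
Your proof is correct. It rests on the same two pillars as the paper's argument --- convexity of positive solutions forced by $x^\mu\ge 0$ for even $\mu$, followed by a first-integral estimate of the form $|y_k'|\gtrsim y_k^2$ and a Riccati-type integration producing finite-time blow-up --- but the execution is genuinely different and, in fact, cleaner. The paper argues by contradiction: it assumes $y_k$ stays bounded, drops the $x^\mu y_k$ term to get $y_k''>y_k^3$, integrates $y_k'y_k''<y_k'y_k^3$ from a reference point $x_0$ to obtain the inequality \eqref{eq:theinequality} with leftover constants, and must then separately show $y_k$ grows without bound so that a point $x_1$ can be chosen at which $\tfrac14 y_k(x_1)^4$ dominates those constants, yielding $y_k'<-\tfrac12 y_k^2$ only on $(-\infty,x_1)$. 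You instead keep the exact energy $E=(y_k')^2-y_k^4$, for which the cubic term cancels identically and $E'=2x^\mu y_ky_k'\le 0$ by parity alone; anchoring $E$ at $+\infty$, where the decay $y_k\sim kf$ makes it vanish, gives $E\ge 0$ and hence the clean pointwise bound $y_k'\le -y_k^2$ on the entire maximal interval, with no contradiction hypothesis, no choice of $x_1$, and no constant-juggling. What the paper's route buys is reusability: the inequality $y_k''>y_k^3$ and its integrated form are invoked again verbatim in the proof of Lemma \ref{lem:oddmublowupopen}, where $\mu$ is odd and the sign of $x^\mu$ is controlled only locally, so your global energy monotonicity would not transfer there. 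Your route buys a sharper and more transparent statement of why parity matters --- the single inequality $E'\le0$ --- and is also closer in spirit to the energy $V_k$ of Equation \eqref{eqn:V} used later in the uniqueness argument. The only points to polish are routine: the term-by-term differentiability of $y_k\sim kf$ should be justified from the integral equation \eqref{eq:integral} of Lemma \ref{lem:inteq} (the paper is equally brief here), and the passage from ``the maximal interval is bounded below'' to ``$y_k\to+\infty$ at its left endpoint'' deserves the one-line appeal to standard continuation theory that you indeed give.
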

\begin{proof}
    Let $k\in\mathbb{R}^+$ and assume $\mu\in\mathbb{Z}^+$ is even. For sufficiently large $x\in\mathbb{R}^+$, $y_k(x)$ behaves like $kf(x)$. Therefore, there exists $x_0\in\mathbb{R}^+$ so that $y_k(x)>0$ and $y_k'(x)<0$ for all $x\in[x_0,\infty)$. Since $\mu$ is even, we immediately deduce from $\Ptwomu$ that $y_k(x)$ is convex and positive for all $x\in\mathbb{R}$.
    
    The remainder of the proof proceeds by contradiction. Assume that $y_k(x)$ remains bounded for all $x\in(-\infty,x_0)$, i.e., it is not singular in this interval.
    For all $x\in(-\infty,x_0)$, we have $y_k''(x)>y_k(x)^3$ and thus $y_k'(x)y_k''(x)<y_k'(x)y_k(x)^3$ (since $y_k'$ is negative). We integrate the latter autonomous inequality between $x$ and $x_0$, which after rearranging terms gives
    \begin{equation}\label{eq:theinequality}
        y_k'(x)^2>\frac{1}{2}y_k(x)^4+y_k'(x_0)^2-\frac{1}{2}y_k(x_0)^4.
    \end{equation}
    
    We now show that $y_k(x)$ is monotonically increasing to $+\infty$ as $x$ decreases. Suppose instead that $y_k'(x_1)$ vanishes at some $x_1<x_0$, while $y_k'(x)<0$ for all $x\in(x_1,x_0]$. The inequality \eqref{eq:theinequality} would then imply that $y_k(x_1)<y_k(x_0)$, which contradicts the fact that $y_k(x)$ is still increasing as we move from $x_0$ to $x_1$.
    
    Therefore, we can choose $x_1\in(-\infty,x_0)$ such that $\frac{1}{4}y_k(x_1)^4+y_k'(x_0)^2-\frac{1}{2}y_k(x_0)^4>0$, hence $y_k'(x)^2>\frac{1}{4}y_k(x)^4$ for all $x\in(-\infty,x_1)$. Therefore, $y_k'(x)<-\frac{1}{2}y_k(x)^2$, and another integration gives
    \begin{equation*}
        \frac{1}{y_k(x_1)}-\frac{1}{y_k(x)}>\frac{x_1-x}{2},
    \end{equation*}
    for all $x<x_1$, which contradicts the assumption that $y_k(x)$ is bounded and non-singular on the interval.
\end{proof}

As a direct result of Lemma \ref{lem:evenmu}, we have proven the non-existence of the Hastings-McLeod-type solution in the case of even $\mu$. We now continue our argument in the case of odd $\mu$.

\begin{lemma}\label{lem:oddmublowupopen}
    For every odd $\mu\in\mathbb{Z}^+$, the set $S_1\subset\mathbb{R}^+$ of $k$ values, for which $y_k(x)$ remains positive as $x$ decreases and becomes singular at some finite value of $x$, is an open set.
\end{lemma}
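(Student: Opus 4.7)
My plan is to show openness of $S_1$ by a continuous-dependence plus energy-estimate argument: take $k_0\in S_1$ and transfer its blow-up behaviour to all nearby $k$ using Lemma \ref{lem:inteq} together with an adaptation of the inequality argument in Lemma \ref{lem:evenmu}.

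The first step is to anchor the problem at a point close to the blow-up. Local analysis of $\Ptwomu$ at a movable singularity gives $y_{k_0}(x)\sim 1/(x-x^*)$ and $y_{k_0}'(x)\sim -1/(x-x^*)^2$ as $x\to(x^*)^+$, where $x^*=x^*(k_0)$. I would therefore pick $x_1\in(x^*,\min(x^*+1,\,0))$ so close to $x^*$ that $M:=y_{k_0}(x_1)$ is as large as I wish, with $y_{k_0}'(x_1)^2$ comparable to $M^4$. Continuous dependence on $k$ (Lemma \ref{lem:inteq}), combined with the uniform asymptotics $y_k\sim kf(x)$ for large $x$, then provides a neighbourhood $|k-k_0|<\delta$ on which $y_k>0$ throughout $[x_1,\infty)$ and $y_k(x_1)>M/2$, $y_k'(x_1)<\tfrac12 y_{k_0}'(x_1)<0$. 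The task then reduces to a local backwards Cauchy problem at $x_1$ with large positive $y$ and very negative $y'$, uniform in $k$.

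For this local problem, multiplying $\Ptwomu$ by $2y_k'$, integrating from $x<x_1$ up to $x_1$, and integrating by parts on the $x^\mu y_k y_k'$ term gives
\begin{equation*}
    y_k'(x)^2 \;=\; y_k(x)^4 + x^\mu y_k(x)^2 + \mu\!\int_x^{x_1}\! t^{\mu-1}y_k(t)^2\,dt + C(k),
\end{equation*}
with $C(k)=y_k'(x_1)^2-y_k(x_1)^4-x_1^\mu y_k(x_1)^2$. For odd $\mu$, $t^{\mu-1}\geq 0$ so the integral is non-negative, while monotonicity ($y_k''>0$ whenever $2y_k^2>|x|^\mu$, enforced by choosing $M$ large) keeps $y_k$ increasing as $x$ decreases and keeps $y_k'<0$. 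I expect this to yield $y_k'(x)<-y_k(x)^2/\sqrt2$, whose integration forces $y_k$ to blow up by $x=x_1-\sqrt2/M$ while remaining positive, placing $k$ in $S_1$.

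The main obstacle, absent in the even-$\mu$ case of Lemma \ref{lem:evenmu}, is that $x^\mu y_k^2$ carries the \emph{opposite} sign from the driving term $y_k^4$ when $\mu$ is odd and $x<0$, and so cannot simply be discarded. My remedy exploits the smallness of the blow-up window: since blow-up occurs within an interval of length $\mathcal O(1/M)$, the factor $|x|^\mu$ stays comparable to $|x_1|^\mu$, while $y_k^2\geq M^2$ on the window by monotonicity, so choosing $M$ sufficiently large (uniformly over $|k-k_0|<\delta$) ensures the nonlinear term dominates throughout, making the argument self-consistent.
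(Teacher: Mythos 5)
Your proposal is correct and follows essentially the same route as the paper: anchor at a point $x_1$ near the blow-up where $y_{k_0}$ is large, transfer this to nearby $k$ via the continuity from Lemma \ref{lem:inteq}, and defeat the adverse sign of $x^\mu y$ for odd $\mu$ and $x<0$ by noting that $y_k(x)^2$ dominates $|x|^\mu$ on a bounded window, so the cubic term forces finite-time blow-up exactly as in Lemma \ref{lem:evenmu}. The only differences are presentational — you carry the first integral with the explicit $\mu\int t^{\mu-1}y_k^2\,dt$ remainder and a window of length $\mathcal{O}(1/M)$, where the paper uses the pointwise bound $|x|^\mu<(|x_1|+1)^\mu<y_k(x)^2$ on a window of length $1$ to get $y_k''>y_k^3$ directly — and your appeal to the pole expansion $y_{k_0}\sim 1/(x-x^*)$ is unnecessary (and not justified, since $\Ptwomu$ lacks the Painlevé property for $\mu>1$): unboundedness plus convexity already make $y_{k_0}(x_1)$ and $|y_{k_0}'(x_1)|$ arbitrarily large, which is all that is needed.
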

\begin{proof}
    Suppose we have $k_0\in\mathbb{R}^+$ and $x_0\in\mathbb{R}$ such that $y_{k_0}(x)$ blows up at $x=x_0$. We now show that there exists a neighbourhood of $k$ values (sufficiently close to $k_0$) such that the solution $y_k$ still becomes singular. Since $y_{k_0}(x_0)$ is unbounded, we may choose $x_1>x_0$ to make $y_{k_0}(x_1)$ arbitrarily large. We choose $x_1$ near $x_0$ such that
    \begin{equation}\label{ineq:a}
        y_{k_0}(x_1)^2>(|x_1|+1)^\mu, 
    \end{equation}
    noting then that $y_{k_0}(x)$ remains convex on $(x_1-1,x_1)$ while it continues to exist. Based on the continuous dependence of $y_k(x)$ on $k$, we may replace $k_0$ with sufficiently close $k$ so that the inequality (\ref{ineq:a}) continues to hold. Furthermore, on the interval $(x_1-1,x_1)$, we have $|x|<|x_1|+1$, i.e. $|x|^\mu<(|x_1|+1)^\mu$. Hence
    \begin{equation*}
        |x|^\mu<(|x_1|+1)^\mu<y_k(x_1)^2<y_k(x)^2.
    \end{equation*}
    So if $x<0$, we have
    \begin{equation}\label{ineq:b}
        y_k''(x)=2y_k(x)^3+x^\mu y_k(x)=2y_k(x)^3-|x|^\mu y_k(x)>2y_k(x)^3-y_k(x)^3=y_k(x)^3.
    \end{equation}
    Note that if $x\geq0$ we still have $y_k''(x)>y_k(x)^3$.
    
    We have therefore shown that $y_k''(x)>y_k(x)^3$ on $x\in(x_1-1,x_1)$. Then from the integration of this autonomous inequality, the proof of Lemma \ref{lem:evenmu} shows that $y_k(x)$ will become unbounded in this interval given sufficiently large $|y_k(x_1)|$ and $|y_k'(x_1)|$. As previously noted, we may choose $x_1$ so that these are arbitrarily large and so the desired result follows.
\end{proof}

\begin{lemma}\label{lem:oddmunegopen}
    For every odd $\mu\in\mathbb{Z}^+$, the set $S_2\subset\mathbb{R}^+$ of $k$ values, for which $y_k(x)$ becomes negative at some point as $x$ decreases, is an open set.
\end{lemma}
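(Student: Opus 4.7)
The plan is to derive openness of $S_2$ directly from the continuous dependence on $k$ established in Lemma \ref{lem:inteq}, in analogy with the structure of Lemma \ref{lem:oddmublowupopen} but by a much simpler mechanism for propagating the defining property to nearby parameter values.

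Concretely, suppose $k_0\in S_2$. By definition of $S_2$, there exists a real $x_0$ such that $y_{k_0}(x_0)<0$; in particular, the solution $y_{k_0}$ continues to exist at $x_0$ as $x$ decreases from $+\infty$. First I would invoke Lemma \ref{lem:inteq}, which asserts that at each $x$ for which the solution continues to exist, $y_k(x)$ is a continuous function of $k$. Applying this at the fixed point $x=x_0$ produces a neighbourhood $U\subset\mathbb{R}^+$ of $k_0$ such that $y_k$ still exists at $x_0$ and $y_k(x_0)$ remains sufficiently close to $y_{k_0}(x_0)$ that it is strictly negative. Hence $U\subset S_2$, so $S_2$ is open.

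The main potential obstacle is ensuring that the solution $y_k$ actually reaches $x_0$ for every $k$ in a neighbourhood of $k_0$; a priori, a nearby solution might develop a singularity in the open interval $(x_0,\infty)$ before arriving at $x_0$. However, this is precisely what the continuous dependence statement in Lemma \ref{lem:inteq} is designed to exclude at each specific $x$ at which the base solution exists, being a standard consequence of the contraction-mapping construction and local ODE theory applied to the integral equation (\ref{eq:integral}). Once this is invoked, no further analytic estimates (unlike the convexity argument and successive integrations used to handle $S_1$ in Lemma \ref{lem:oddmublowupopen}) are required.
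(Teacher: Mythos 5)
Your argument is correct and is essentially identical to the paper's proof: both invoke the continuous dependence of $y_k(x_0)$ on $k$ from Lemma \ref{lem:inteq} at the fixed point $x_0$ where $y_{k_0}(x_0)<0$ to conclude that nearby solutions are also negative there. Your additional remark about ensuring the nearby solutions still exist at $x_0$ is a sensible clarification of what the paper leaves implicit.
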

\begin{proof}
    Suppose there exist $k_0$ and $x_0$ such that $y_{k_0}(x_0)<0$, then by the continuity of $y_k$ in $k$, we may deduce that $y_k(x_0)<0$ for all $k$ in some neighborhood of $k_0$.
\end{proof}

\begin{lemma}
    For every odd $\mu\in\mathbb{Z}^+$, the set $S_1$ is non-empty.
\end{lemma}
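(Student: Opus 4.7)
The plan is to show that for all sufficiently large $k$, the solution $y_k$ blows up at a finite value of $x$ while remaining positive, so such $k$ lie in $S_1$. The strategy combines the convexity-based blow-up argument in the proof of Lemma \ref{lem:oddmublowupopen} with a monotonicity argument forcing $y_k(x_0)$ to become arbitrarily large at a fixed $x_0 > 0$ as $k \to \infty$.

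First, I would establish monotonicity in $k$: for $0 < k_1 < k_2$, the difference $v := y_{k_2} - y_{k_1}$ satisfies the linear ODE $v'' = p(x)\,v$ with $p(x) = 2(y_{k_1}^2 + y_{k_1}y_{k_2} + y_{k_2}^2) + x^\mu > 0$ on $[0, \infty)$, and $v \sim (k_2 - k_1) f(x) > 0$ as $x \to \infty$. A Sturm-type argument rules out any interior zero: at a rightmost zero $\bar x$ one would have $v'(\bar x) > 0$ by uniqueness of ODE solutions, and then $v'' = pv > 0$ on $(\bar x, \infty)$ would force $v$ to grow unboundedly, contradicting $v \to 0$ at infinity.

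Next, fix $x_0 > 0$ and define $K^{\ast} := \sup\{k > 0 : y_k \text{ exists on } [x_0, \infty)\}$. If $K^{\ast} < \infty$, then for $k > K^{\ast}$ the solution blows up at some $x^{\ast} \geq x_0$; the convexity of $y_k$ (since $y_k'' = 2 y_k^3 + x^\mu y_k > 0$ when $y_k > 0$ and $x > 0$) together with $y_k \sim k f > 0$ near infinity forces $y_k > 0$ throughout, so $k \in S_1$. Otherwise $K^{\ast} = \infty$, and I claim $y_k(x_0) \to \infty$ as $k \to \infty$: if instead $y_k(x_0) \leq L$ uniformly, the energy inequality $(y_k')^2 \geq x^\mu y_k^2$ on $[x_0, \infty)$ integrates to a uniform exponential decay bound $y_k(x) \leq L \exp(-\tfrac{2}{\mu+2}(x^{(\mu+2)/2} - x_0^{(\mu+2)/2}))$, from which $y_k'$ is also uniformly bounded via the ODE. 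Arzelà–Ascoli then yields a limit $y_\infty$ solving $\Ptwomu$ positively and decaying to $0$, hence $y_\infty = y_{k^{\dagger}}$ for some finite $k^{\dagger}$ by Lemma \ref{lem:inteq}; monotonicity then implies $y_{k^{\dagger}+1}(x_0) > y_{k^{\dagger}}(x_0) = \sup_k y_k(x_0)$, a contradiction.

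Finally, once $y_k(x_0)$ can be made arbitrarily large, the energy identity $E'(x) = -\tfrac{\mu}{2} x^{\mu-1} y_k^2 \leq 0$ (using that $\mu - 1$ is even when $\mu$ is odd) combined with $E(+\infty) = 0$ yields $|y_k'(x_0)| \geq y_k(x_0)^2$. Choosing $k$ so that $y_k(x_0)^2 > (x_0 + 1)^\mu$, the reasoning in the proof of Lemma \ref{lem:oddmublowupopen} (integrating $y_k'' > y_k^3$ on $(x_0 - 1, x_0)$) forces $y_k$ to blow up within that subinterval while remaining positive, so $k \in S_1$. The main obstacle is the compactness/uniqueness step: the uniform decay estimates on $y_k$ must be established carefully in order to apply Arzelà–Ascoli and then to invoke Lemma \ref{lem:inteq} to identify the limit and derive the contradiction.
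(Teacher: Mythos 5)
Your proof is essentially correct, but it takes a genuinely different and much heavier route than the paper's. The paper's argument is two lines: because the kernel $f(x)g(t)-f(t)g(x)$ in the integral equation (\ref{eq:integral}) is positive for $t>x$ (with $f$ decreasing and $g$ increasing), one gets $y_k(x)>kf(x)$ and $y_k'(x)<kf'(x)<0$ directly, so at a fixed $x_1$ the data $y_k(x_1)$ and $|y_k'(x_1)|$ grow without bound as $k\to\infty$, and the blow-up mechanism of Lemma \ref{lem:oddmublowupopen} finishes the job. You instead establish that $y_k(x_0)\to\infty$ by a comparison/Sturm argument giving monotonicity of $y_k$ in $k$ on $[0,\infty)$, followed by a compactness argument (uniform decay from the energy inequality, Arzel\`a--Ascoli, identification of the limit) to rule out uniform boundedness; you then recover the largeness of $|y_k'(x_0)|$ from the energy functional rather than from the integral equation. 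Both proofs terminate in the same blow-up step, so yours is a valid alternative, and the monotonicity and energy estimates you develop are independently useful (they reappear in the paper's uniqueness section). What the paper's route buys is brevity and the avoidance of your acknowledged weak point: the step ``hence $y_\infty=y_{k^\dagger}$ for some finite $k^\dagger$ by Lemma \ref{lem:inteq}'' is not literally what that lemma asserts -- it gives existence and uniqueness of a solution asymptotic to $kf$ for each $k$, not that every positive exponentially decaying solution is of this form. To close this you would need the asymptotic classification of decaying solutions (essentially Equation (\ref{eq:expDecay}) made rigorous), which is standard but should be spelled out; the paper's kernel-positivity argument sidesteps the issue entirely.
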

\begin{proof}
    It follows from Equation (\ref{eq:integral}) that $y_k(x)>kf(x)$ for sufficiently large $x$. Differentiation of Equation (\ref{eq:integral}) shows similarly that $y_k'(x)<kf'(x)<0$. Using the same argument as in the proof of Lemma \ref{lem:oddmublowupopen}, we may choose sufficiently large $x_1$ and $k$ so that $y_k(x)$ blows up in $(x_1-1,x_1)$.
\end{proof}

\begin{lemma}
    For every odd $\mu\in\mathbb{Z}^+$, the set $S_2$ is non-empty.
\end{lemma}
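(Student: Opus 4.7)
The plan is to show that for all sufficiently small $k>0$ we have $k\in S_2$; the main idea is that the nonlinear term $2y^3$ in $\Ptwomu$ is subdominant when $y_k$ is small, so $y_k$ tracks the linearisation $kf$ closely, and for odd $\mu$ the function $f$ oscillates and changes sign on the negative real axis. Concretely, I will first argue, using WKB on the linear ODE $\phi''=x^\mu\phi$ (or equivalently the known relationship between $K_\nu$ of imaginary argument and Hankel functions, noting that $\alpha=(\mu+2)/2$ is a half-integer when $\mu$ is odd), that
\begin{equation*}
f(x)\sim C|x|^{-\mu/4}\sin\Bigl(\tfrac{2}{\mu+2}|x|^{(\mu+2)/2}+\phi_0\Bigr)\quad\text{as}\quad x\to-\infty
\end{equation*}
for some nonzero constant $C$ and phase $\phi_0$, generalising the oscillatory behaviour of the classical Airy function. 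In particular, I may fix a point $x_0<0$ with $f(x_0)<0$.

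Next I rescale by setting $Y_k(x):=y_k(x)/k$, so that dividing Equation (\ref{eq:integral}) by $k$ gives
\begin{equation*}
Y_k(x)=f(x)+k^2\int_x^\infty\bigl(f(x)g(t)-f(t)g(x)\bigr)\,Y_k(t)^3\,dt.
\end{equation*}
I then aim to show $Y_k\to f$ uniformly on $[x_0,\infty)$ as $k\to 0^+$. Practically, I would split the argument into two stages: on the tail $[X,\infty)$ with $X$ sufficiently large, the contraction argument from the proof of Lemma \ref{lem:inteq} applies to $Y_k$ with a perturbation of size $O(k^2)$, yielding $\|Y_k-f\|_{L^\infty([X,\infty))}=O(k^2)$; on the compact piece $[x_0,X]$, standard continuous dependence of solutions of $Y''=2k^2Y^3+x^\mu Y$ on the parameter $k$, with initial data prescribed at $x=X$, gives uniform convergence $Y_k\to f$. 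Combining, $Y_k(x_0)\to f(x_0)<0$, so $y_k(x_0)=kY_k(x_0)<0$ for small enough $k$, proving $k\in S_2$.

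The main obstacle is making the contraction/continuous dependence argument rigorous uniformly in $k$ while handling the growing solution $g$ that appears in the kernel: a naive $L^\infty$ estimate on $[x_0,\infty)$ diverges because $g(x)$ blows up at $+\infty$. The two-stage split above circumvents this by using the exponential decay of $f$ on the tail, but one could alternatively set up a weighted norm tailored to the Wronskian pair $(f,g)$. Once this estimate is in place, the conclusion $k\in S_2$ for small $k$ is immediate.
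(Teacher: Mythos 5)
Your proposal is correct, but it takes a genuinely different route from the paper. The paper's proof is a concavity/comparison argument: it introduces the nullcline $p(x)$ defined by $2p(x)^2=-x^\mu$ on $x\le 0$ (where $y''=y(2y^2+x^\mu)$ changes sign for $y>0$), uses continuity of $y_k(x_0)$ and $y_k'(x_0)$ in $k$ at a single fixed $x_0<0$ to place the small-$k$ solution below $p$ with a suitable slope, and then concludes that $y_k''<0$ persists for $x\le x_0$, forcing a zero crossing. Your argument instead works perturbatively off the linearisation: $Y_k=y_k/k\to f$ uniformly on $[x_0,\infty)$ as $k\to0^+$, and $f$ changes sign on the negative real axis for odd $\mu$, so $y_k(x_0)<0$ for small $k$. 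Both proofs ultimately rest on the $k\to0^+$ limit, and your two-stage estimate (contraction on a tail $[X,\infty)$ uniform in small $k$, then continuous dependence on the compact piece $[x_0,X]$, which also guarantees existence of $y_k$ down to $x_0$) is a sound way to make the convergence rigorous. What each approach buys: the paper's is more elementary and avoids any connection-formula input, though as written it quietly requires arranging the sign condition $p'(x_0)<y_k'(x_0)<0$ in the limit $y_k'(x_0)\to0$; yours needs more machinery but explains \emph{why} the solution goes negative (it tracks the oscillating generalised Airy function, consistent with \eqref{eq:expOscillatory}) and in fact yields the stronger conclusion that \emph{every} sufficiently small $k>0$ lies in $S_2$, which is the picture the paper's final remark appeals to. One simplification you could make: the only fact you need about $f$ is that it has a zero with a sign change on $(-\infty,0)$, and since $\phi''=x^\mu\phi=-|x|^\mu\phi$ is oscillatory there for odd $\mu$, Sturm comparison gives infinitely many zeros of every nontrivial real solution without invoking WKB or the Bessel--Hankel connection formulas (which you would otherwise need in order to certify that the amplitude $C$ is nonzero).
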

\begin{proof}
    Let $\mu>1$ be an odd integer. Consider the graph of $p(x)$ given by the positive real root of  $2p(x)^2=-x^\mu$ on the interval $x\in(-\infty,0]$. Let $x_0<0$. Recall that $y_0(x)$ corresponds to the $k=0$ case in Lemma \ref{lem:inteq} and is the identically zero solution of $\Ptwomu$. By the continuity of $y_k(x_0)$ in $k$, we may choose $k$ sufficiently small so that $y_k(x_0)<p(x_0)$ and $p'(x_0)<y_k'(x_0)<0$. Examination of $\Ptwomu$ then shows that $y_k''(x)<0$ for $x\leq x_0$. Therefore, $y_k(x)$ vanishes in this interval and becomes negative to the left of $x_0$.
\end{proof}

We now conclude the proof of existence. We have two non-empty, open sets $S_1$ and $S_2$, which are disjoint. So there must exist at least one positive value of $k$, which lies in neither $S_1$ nor $S_2$. The solution $y_k(x)$ corresponding to such a $k$ is then finite and positive for all $x$. If such a solution were to decay to zero as $x\rightarrow-\infty$, then it would be oscillatory in this asymptotic limit (see Equation (\ref{eq:expOscillatory})), thus vanishing at some point and contradicting the fact that $k\notin S_2$. Therefore, we conclude that $y(x)\sim y_f(x)$ as $x\rightarrow-\infty$. So $y_k(x)$ is a Hastings-McLeod-type solution.

\subsection{Uniqueness}
We now give the argument for uniqueness to complete the proof of Theorem \ref{thrm:HM}.
\begin{lemma}
\label{lemma:k1bigk2}
    Let $\mu$ be an odd integer. Suppose $y_{k_1}(x)$ and $y_{k_2}(x)$ are Hastings-Mcleod-type solutions with $k_1>k_2$, then $y_{k_1}(x)>y_{k_2}(x)$ for all $x$.
\end{lemma}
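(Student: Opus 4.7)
My plan is to argue by contradiction using the Wronskian of the two solutions. Set $y_i = y_{k_i}$ and $w = y_1 - y_2$. The asymptotics at $+\infty$ give $w(x) \sim (k_1 - k_2) f(x) > 0$, so $w$ is eventually positive. Suppose $w$ has a zero somewhere; since $w > 0$ for large $x$, the zero set is bounded above, and I let $x_0 \in \mathbb R$ denote its supremum, so that $w(x_0) = 0$ and $w > 0$ on $(x_0, +\infty)$. Subtracting the two copies of $\Ptwomu$ shows that $w$ satisfies the linear ODE
\[
w''(x) = \bigl[2(y_1(x)^2 + y_1(x) y_2(x) + y_2(x)^2) + x^\mu\bigr]\,w(x);
\]
since $k_1 \neq k_2$ implies $w \not\equiv 0$, uniqueness for second-order linear ODEs forbids $w(x_0) = w'(x_0) = 0$, and combined with $w > 0$ just above $x_0$ this forces $w'(x_0) > 0$.

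The central tool is the Wronskian $W(x) := y_1(x) y_2'(x) - y_1'(x) y_2(x)$. A short computation using $y_i'' = 2 y_i^3 + x^\mu y_i$ gives
\[
W'(x) = 2\,y_1(x) y_2(x)\bigl(y_2(x)^2 - y_1(x)^2\bigr) = -2\,y_1(x) y_2(x) \bigl(y_1(x) + y_2(x)\bigr)\,w(x).
\]
Evaluating at $x_0$, where $y_1(x_0) = y_2(x_0) =: a > 0$, I find $W(x_0) = -a\,w'(x_0) < 0$. At the other end, the exponential decay of both $y_i$ and $y_i'$ at $+\infty$ (together with the cancellation of the leading-order contributions, which are proportional to $ff'$) gives $W(x) \to 0$ as $x \to +\infty$.

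The contradiction now follows from the sign of $W'$ on $(x_0, +\infty)$: since $y_1, y_2 > 0$ and $w > 0$ there, one has $W'(x) < 0$ on this interval, so $W$ is strictly decreasing. Combined with the boundary values above, this yields $0 = W(+\infty) < W(x_0) < 0$, a contradiction. Hence $w$ has no zero, and $y_{k_1}(x) > y_{k_2}(x)$ for all $x \in \mathbb R$.

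The main obstacle is the strict positivity of $y_1$ and $y_2$ on $\mathbb R$, on which the sign analysis of $W$ and $W'$ depends. I expect this to be settled as a preliminary observation: in the existence argument, $k \notin S_2$ ensures an HM-type solution is nonnegative, and a zero $y_k(x_1) = 0$ would either force $y_k \equiv 0$ by ODE uniqueness (when $y_k'(x_1)=0$, which is impossible given the prescribed asymptotics) or force $y_k$ to cross into the negative half-plane (when $y_k'(x_1) \neq 0$, contradicting $k \notin S_2$).
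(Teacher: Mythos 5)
Your argument is correct and takes a genuinely different route from the paper. Where you use the Wronskian $W=y_1y_2'-y_1'y_2$, with $W'=2y_1y_2(y_2^2-y_1^2)$, the paper uses the energy function $V_k(x)=y_k'(x)^2-x^\mu y_k(x)^2-y_k(x)^4$, for which $V_k'(x)=-\mu x^{\mu-1}y_k(x)^2$. The two proofs then run in parallel: each compares a monotone quantity at the last crossing point $x_0$ and at $+\infty$. At $x_0$ the paper only extracts $V_{k_1}(x_0)-V_{k_2}(x_0)=y_{k_1}'(x_0)^2-y_{k_2}'(x_0)^2\le 0$ from the crossing geometry, whereas you get the strict inequality $w'(x_0)>0$ (hence $W(x_0)<0$) via uniqueness for the linear ODE satisfied by $w$ --- a slightly sharper localisation of the contradiction. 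At the other end, your boundary value $W\to 0$ follows trivially from the exponential decay of $y_i$ and $y_i'$, while the paper's claim that $V_{k_1}(N)-V_{k_2}(N)>0$ for large $N$ really requires writing $V_k(N)=\mu\int_N^\infty t^{\mu-1}y_k(t)^2\,dt$; your version of this step is cleaner. A further structural difference is that the oddness of $\mu$ enters the paper's proof directly (it needs $x^{\mu-1}\ge 0$ to get $V_k'\le 0$), whereas in your proof it enters only through the positivity of the solutions. On that last point you are right to flag positivity as the load-bearing hypothesis: your sketch of it is slightly circular (you invoke $k\notin S_2$, but for an arbitrary Hastings--McLeod-type solution one must first argue that its $k$ avoids $S_2$, e.g.\ by ruling out a negative interior minimum using the prescribed asymptotics at both ends and IVP uniqueness at a double zero). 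The paper's own proof leans on exactly the same positivity without comment (it needs $y_{k_1}+y_{k_2}>0$ to pass from $y_{k_1}>y_{k_2}$ to $y_{k_1}^2>y_{k_2}^2$), so this is a shared reliance rather than a defect specific to your approach.
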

\begin{proof}
  Note that the asymptotic behaviour as $x\to+\infty$ implies that $y_{k_1}(x)>y_{k_2}(x)$ for sufficiently large $x$. If the assertion of the lemma fails to hold, then there exists  $x_0\in \mathbb R$ such that $y_{k_1}(x_0)=y_{k_2}(x_0)$, while $y_{k_1}(x)>y_{k_2}(x)$ on $(x_0,\infty)$. For this to occur, we must have $y_{k_1}'(x_0)^2\leq y_{k_2}'(x_0)^2$.

  Define the energy function
    \begin{equation}
    \label{eqn:V}
        V_k(x)=y_k'(x)^2-x^\mu y_k(x)^2-y_k(x)^4.
    \end{equation}
    From $\Ptwomu$, it follows that
    \begin{equation}
    \label{eqn:VPrime}
        V_k'(x)=-\mu x^{\mu-1}y_k(x)^2.
    \end{equation}
    From Equation (\ref{eqn:V}), we have
    \begin{equation}\label{eqn:Vx0}
        V_{k_1}(x_0)-V_{k_2}(x_0)=y_{k_1}'(x_0)^2-y_{k_2}'(x_0)^2\leq 0,
    \end{equation}
    which implies $V_{k_1}(x_0)\leq V_{k_2}(x_0)$.

    Using Equation (\ref{eqn:VPrime}), $y_{k_1}(x)>y_{k_2}(x)$ implies that $V_{k_1}'(x)<V_{k_2}'(x)$. We proceed to integrate the autonomous inequality $V_{k_1}'(x)-V_{k_2}'(x)<0$ over the interval $(x_0,N)$, where $N\gg1$, giving
    \begin{align}\label{ineq:Nx0}
        V_{k_1}(N)-V_{k_2}(N)<V_{k_1}(x_0)-V_{k_2}(x_0).
    \end{align}
    
    However, for sufficiently large $N$ we have $V_{k_1}(N)-V_{k_2}(N)>0$, and we already deduced that $V_{k_1}(x_0)-V_{k_2}(x_0)\leq0$. These observations contradict the inequality (\ref{ineq:Nx0}). Therefore, $x_0$ cannot exist and the desired result follows.
\end{proof}
\begin{lemma}
Let $\mu$ be an odd integer. Suppose $y_{k_1}(x)$ and $y_{k_2}(x)$ are Hastings-Mcleod-type solutions. Then $k_1=k_2$.
\end{lemma}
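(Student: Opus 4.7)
The plan is to argue by contradiction: suppose $k_1\neq k_2$ and, without loss of generality, $k_1>k_2$. By Lemma \ref{lemma:k1bigk2} we then have $y_{k_1}(x)>y_{k_2}(x)>0$ for every $x\in\mathbb R$. I will reuse the energy function $V_k$ introduced in \eqref{eqn:V} and track
\begin{equation*}
F(x):=V_{k_1}(x)-V_{k_2}(x),
\end{equation*}
showing that it is strictly decreasing on $\mathbb R$ yet tends to the same finite limit at both endpoints; this is impossible.

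For the monotonicity step, \eqref{eqn:VPrime} gives
\begin{equation*}
F'(x)=-\mu\,x^{\mu-1}\bigl(y_{k_1}(x)^2-y_{k_2}(x)^2\bigr).
\end{equation*}
Since $\mu$ is odd, the exponent $\mu-1$ is even, so $x^{\mu-1}\geq 0$ with equality only at $x=0$; combined with $y_{k_1}^2-y_{k_2}^2>0$ this yields $F'(x)\leq 0$ with strict inequality for $x\neq 0$, so $F$ is strictly decreasing on $\mathbb R$.

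For the boundary behaviour, as $x\to+\infty$ both $y_{k_i}$ and $y_{k_i}'$ decay exponentially (via Lemma \ref{lem:inteq} and \eqref{eq:expDecay}), and in particular $x^\mu y_{k_i}^2$ also decays exponentially, so each term of each $V_{k_i}$ tends to zero and $F(x)\to 0$. As $x\to-\infty$, Definition \ref{def:HM} says that both $y_{k_1}$ and $y_{k_2}$ are asymptotic to the same formal series $y_f(x)$ from \eqref{eq:yf}, so $y_{k_1}-y_{k_2}$ is beyond all algebraic orders in $|x|^{-1}$; the analogous statement for the derivative $y_{k_1}'-y_{k_2}'$ follows by applying variation of parameters to the linear second-order ODE satisfied by $y_{k_1}-y_{k_2}$, whose fundamental system is controlled at infinity by the generalised Airy functions in \eqref{eq:genAi}. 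Using the factorisations $y_{k_1}^2-y_{k_2}^2=(y_{k_1}+y_{k_2})(y_{k_1}-y_{k_2})$, $y_{k_1}^4-y_{k_2}^4=(y_{k_1}^2+y_{k_2}^2)(y_{k_1}^2-y_{k_2}^2)$, and similar for the derivative-square term, together with the fact that the sum factors are $O(|x|^{\mu/2})$, every piece of $F(x)$ is beyond all orders, hence $F(x)\to 0$ at $-\infty$ as well.

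Putting this together, strict monotonicity forces $F(x_0)<F(x_1)$ for $x_1<x_0$ and $F(x_0)>F(x_2)$ for $x_2>x_0$; letting $x_1\to-\infty$ and $x_2\to+\infty$ gives $0\geq F(x_0)\geq 0$ for every $x_0$, i.e. $F\equiv 0$, contradicting strict monotonicity. The main technical obstacle I expect is a rigorous justification of the beyond-all-orders decay of $y_{k_1}'-y_{k_2}'$ at $x\to-\infty$: the paper's existing derivative results (the corollary after Proposition \ref{prop:tronquee}) are stated for Wasow's setting in complex sectors, so along the negative real axis one must supply an independent Liouville--Green/WKB-type estimate; once this is in place, the rest of the argument is essentially a bookkeeping matter.
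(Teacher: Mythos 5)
Your argument is correct and shares its engine with the paper's proof --- both hinge on the energy function $V_k$ of \eqref{eqn:V}, the monotonicity $V_{k_1}'-V_{k_2}'\le 0$ obtained from Lemma \ref{lemma:k1bigk2}, and a beyond-all-orders estimate on $p=y_{k_1}-y_{k_2}$ and $p'$ as $x\to-\infty$ --- but you close the argument along a genuinely different route. The paper applies Olver's Liouville--Green theorem to the exact linear ODE satisfied by $p$, kills the exponentially growing coefficient $b_1$ using $y_{k_i}\sim y_f$, and then, assuming $b_2\neq0$, computes $V_{k_1}-V_{k_2}\sim -2y_f''\,p$ and derives a sign contradiction at $-\infty$ against positivity of $V_{k_1}-V_{k_2}$ at $+\infty$; finally $b_1=b_2=0$ forces $p\equiv 0$ for large negative $x$, contradicting Lemma \ref{lemma:k1bigk2}. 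You instead observe that $F=V_{k_1}-V_{k_2}$ is strictly decreasing yet tends to $0$ at both ends of the real line, which is cleaner: it avoids the case split on $b_2$, the sign computation of $-2y_f''p$ (which requires pinning down the signs of $y_f''$ and $p$ for large negative $x$), and the auxiliary fact $V_{k_1}(N)-V_{k_2}(N)>0$. The technical obstacle you flag --- beyond-all-orders decay of $p'$ at $-\infty$ --- is real and is precisely the point where the paper invokes Olver's theorem \cite[Theorem 2.1, p.193]{olver1997asymptotics}, whose Liouville--Green error bounds are differentiable; the same tool completes your proof. Two small cautions: the relevant comparison equation for $p$ near $-\infty$ is $v''=-2x^\mu v\,(1+\textit{o}(1))$ (the linearisation about $y_f$), not the generalised Airy equation $\phi''=x^\mu\phi$ underlying \eqref{eq:genAi}, so the fundamental system is controlled by the LG solutions of the former; and the strict positivity $y_{k_2}>0$ that you use (needed so that $y_{k_1}>y_{k_2}$ yields $y_{k_1}^2>y_{k_2}^2$ in \eqref{eqn:VPrime}) is not literally part of the statement of Lemma \ref{lemma:k1bigk2}, although the paper relies on it implicitly as well.
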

\begin{proof}
  Suppose $k_1\not=k_2$. Without loss of generality, we assume $k_1>k_2$. As in the proof of Proposition \ref{prop:tritronquee}, we can appeal to another theorem of Olver \cite[Theorem 2.1, p.193]{olver1997asymptotics} to show that $p(x)=y_{k_1}(x)-y_{k_2}(x)$ satisfies
    \begin{align}\label{eqn:expsmall}
        p(x)=&b_1f^{-1/4}\exp\left(\int_{-\infty}^{x}f^{1/2}dx\right)\bigl(1+\textit{o}(1)\bigr)\nonumber\\
        &+b_2f^{-1/4}\exp\left(-\int_{-\infty}^{x}f^{1/2}dx\right)\bigl(1+\textit{o}(1)\bigr),
    \end{align}
    for some constants $b_1$ and $b_2$, where $f=-2\,x^\mu$, for sufficiently large $|x|$, $x<0$. Olver's theorem also states that the behaviour \eqref{eqn:expsmall} is twice differentiable. However, this contradicts the assumed asymptotic behaviour $y_{k_j}\sim y_f$, $k=1, 2$ given in \eqref{eq:yf} unless $b_1=0$.

    Suppose $b_2\not=0$. Then we use the differentiability of the result \eqref{eqn:expsmall} to obtain 
    \begin{align*}
        V_{k_1}(x)-V_{k_2}(x)&=-2\left(2y_f(x)^3+x^\mu\,y_f(x)\right)p(x)\bigl(1+\textit{o}(1)\bigr)\\
        &=-2y_f''(x)p(x)\bigl(1+\textit{o}(1)\bigr).
    \end{align*}
    For sufficiently large $|x|$, $x<0$, recalling that $\mu>1$ is an odd integer and $p(x)>0$, this asymptotic behaviour is negative. However, we showed in the previous lemma that $V_{k_1}(x)-V_{k_2}(x)$ is positive for sufficiently large positive $x$ and that $V_{k_1}'(x)-V_{k_2}'(x)<0$ for all $x$, which is a contradiction. Therefore, we must have $b_2=0$. 
    
    It follows that  $y_{k_1}(x)=y_{k_2}(x)$ for sufficiently large real $|x|$, with $x<0$. This contradicts the result of Lemma \ref{lemma:k1bigk2} unless $k_1=k_2$.
\end{proof}

This completes our proof of uniqueness of the Hastings-McLeod type solution.

\begin{remark} Consider the behaviours of $y_k(x)$ depending on the parameter $k\in\mathbb{R}^+$. For odd $\mu$, suppose $k^*$ is the unique value for which $y_{k*}(x)$ is the Hastings-Mcleod-type solution. Using the same argument as in Lemma \ref{lemma:k1bigk2}, we may show that $0<k<k^*$ implies $|y_k(x)|<y_{k^*}(x)$ for all $x$. So for $0<k<k^*$, we have a family of bounded solutions on the real-line; these solutions have oscillatory asymptotic behaviour given by Equation (\ref{eq:expOscillatory}). On the other hand, when $k^*<k$, we have solutions which are positive and convex for all $x$, becoming singular at some finite point as $x$ decreases.
\end{remark}

\section{Conclusion}\label{sec:conc}
In this paper, we investigated a family of nonlinear ODEs, parameterised by $\mu$, which are perturbations of the second Painlev\'{e} equation. We showed how such equations arise in an application, specifically, in a mathematical model of electro-diffusion. Our main results show that the perturbed equation possesses solutions that are natural generalisations of two celebrated solutions of $\Ptwo$.

We found that the perturbed equation always has solutions which we call tri\-tronque\'e, first described by Boutroux for $\Ptwo$. These solutions are asymptotic to a power-series expansion in a surprisingly broad annular region of the complex plane, near infinity. Furthermore, if our integer parameter $\mu$ is odd, $\Ptwomu$ has a solution analogous to the famous Hastings-McLeod solution of $\Ptwo$. The Hastings-McLeod-type solution is holomorphic, real-valued and positive on the entire real-line, with known asymptotic behaviour in either direction.

These results naturally give rise to similar questions about perturbations of other Painlev\'e equations. They suggest that certain behaviours, important in applications, are preserved by classes of perturbations.

%%%%%%%%%%%%%%%%%%%%%%%%%%%%%%%%%%%%%%%%%%%%%%%%%%%%%%%%%%%%%%%%%%%%%%%%%%%%%%%%%%%%%%%%%%%%%%%%%%%%%%%%
\appendix
\section{Physical application of the perturbed equation}
\label{section:app}

This section shows how $\Ptwomu$ arises in Bass' electro-diffusion model \cite{bass1964irreversible} under a small change in the physical setting. Specifically, we allow particle flux to be non-constant.

Assume that $c_+(x)$ and $c_-(x)$ represent the respective (dimensionless) concentrations of two ionic species with equal and opposite charge, while $E(x)$ represents the (dimensionless) induced electric field. The following coupled nonlinear ODEs govern the model of electro-diffusion:
\begin{align}
    c'_+(x)&=E(x)c_+(x)+A_+,\label{eqn:ED1}\\
    c'_-(x)&=-E(x)c_-(x)+A_-,\label{eqn:ED2}\\
    \lambda^2E'(x)&=c_+(x)-c_-(x),\label{eqn:ED3}
\end{align}
where $\lambda$ is a constant, the terms $A_\pm$ arise due to Fick's law's standard passive diffusion process, and we note that $A_\pm$ is proportional to the ratio between the species flux in the $x$-direction, $\phi_\pm$, and the species diffusivity, $D_\pm$.

In the original model, $\phi_\pm$ is constant in $x$ due to the conservation of mass in a strictly one-dimensional setting. However, we consider the possibility that $\phi_\pm$ may undergo small amounts of variation, as the species may fluctuate in other spatial dimensions. However, we assume $D_\pm$ is constant.

Consider $A_\pm$ as a perturbative series, whereby $A_\pm$ is polynomial in $x$ and constant at leading order. For now, consider a linear correction term. Furthermore, we shall consider a somewhat simplified case whereby $A_+=A_-$, so we have
\begin{equation}
\label{eqn:purturbedParam}
    A_+=A_-=A_0+\epsilon x.
\end{equation}
Then by adding Equations (\ref{eqn:ED1}) and (\ref{eqn:ED2}), and applying Equation (\ref{eqn:ED3}), the result is
\begin{equation}
\label{eqn:ED4}
    c'_+(x)+c'_-(x)=\lambda^2E(x)E'(x)+2A_0+2\epsilon x.
\end{equation}
Upon integrating Equation (\ref{eqn:ED4}) and disregarding the constant of integration, we have
\begin{equation}
\label{eqn:ED5}
    c_+(x)+c_-(x)=\frac{\lambda^2}{2}E(x)^2+\epsilon x^2+2A_0x.
\end{equation}
Meanwhile, by differentiating Equation (\ref{eqn:ED3}), and then applying Equations (\ref{eqn:ED1}) and (\ref{eqn:ED2}), we obtain
\begin{equation}
\label{eqn:ED6}
    \lambda^2 E''(x)=E(x)(c_+(x)+c_-(x)).
\end{equation}
Hence, by combining Equations (\ref{eqn:ED5}) and (\ref{eqn:ED6}), we obtain
\begin{equation}
\label{eqn:EDFinal}
    \lambda^2 E''(x)=\frac{\lambda^2}{2}E(x)^3+(\epsilon x^2+2A_0x)E(x).
\end{equation}
Equation (\ref{eqn:EDFinal}) is a nonlinear ODE which governs the induced electric field as a function of $x$ and is similar in form to $\Ptwomu$ with $\mu=2$. Adding higher degree correction terms in Equation (\ref{eqn:purturbedParam}) would correspond to a higher degree polynomial $p(x)$ in the coefficient of $E$, that is,
\begin{equation*}
    \lambda^2 E''(x)=\frac{\lambda^2}{2}E(x)^3+p(x)E(x).
\end{equation*}
The presence of such higher degree ($>1$) terms in the coefficient function $p(x)$ motivates the study of our perturbed form of $\Ptwo{}$.

\printbibliography
\end{document}